\DeclareMathAlphabet{\mathpzc}{OT1}{pzc}{m}{it}
\let\mathpzc\mathscr
\let\mathpzc\mathcal
\def\BNF{\ \  | \ \  }
\newtheorem{theorem}{Theorem}[section]
\newtheorem{lemma}[theorem]{Lemma}
\newenvironment{proof}[1][Proof]{\begin{trivlist}
\item[\hskip \labelsep {\bfseries #1}]}{\end{trivlist}}
\title{An Intensional Concurrent Faithful Encoding of Turing Machines}
\author{Thomas Given-Wilson
\institute{INRIA, Paris, France
\footnote{This work has been supported by the project ANR-12-IS02-001 PACE.}
}
\email{thomas.given-wilson@inria.fr}
}
\def \rightarrowfill{\m@th\mathord{\smash-}\mkern-6mu%
  \cleaders\hbox{$\mkern-2mu\mathord{\smash-}\mkern-2mu$}\hfill
  \mkern-6mu\mathord\to}
\def \Rightarrowfill{\m@th\mathord{\smash=}\mkern-6mu%
  \cleaders\hbox{$\mkern-2mu\mathord{\smash=}\mkern-2mu$}\hfill
  \mkern-6mu\mathord\Rightarrow}
\renewcommand{\iap}[1]{#1\to}
\renewcommand{\oap}[1]{\overline{#1}\to}
\newcommand{\oan}[1]{\overline{#1}}
\newcommand{\tmach}[1]{\tuple{\tuple{#1}}}
\newcommand{\tape}[1]{[#1]}
\begin{document}
\makeatactive

\maketitle  

\begin{abstract}
The benchmark for computation is typically given as Turing computability;
the ability for a computation to be performed by a Turing Machine.
Many languages exploit (indirect) encodings of Turing Machines to demonstrate
their ability to support arbitrary computation.
However, these encodings are usually by simulating the entire Turing Machine
within the language, or by encoding a language that does an encoding or
simulation itself.
This second category is typical for process calculi that show an encoding of
$\l$-calculus (often with restrictions) that in turn simulates a Turing Machine.
Such approaches lead to indirect encodings of Turing Machines that are complex,
unclear, and only weakly equivalent after computation.
This paper presents an approach to encoding Turing Machines into intensional
process calculi that is faithful, reduction preserving, and structurally
equivalent.
The encoding is demonstrated in a simple asymmetric concurrent pattern calculus
before generalised to simplify infinite terms, and to show encodings into
Concurrent Pattern Calculus and Psi Calculi.
\end{abstract}

\section{Introduction}
\label{sec:intro}

The benchmark for computation is typically given as Turing computability
\cite{zbMATH02522537,Kleene52,McCarthy:1960:RFS:367177.367199,boolos2007computability};
the ability for a computation to be performed by a Turing Machine
\cite{turing36}.
This choice of benchmark is also widely supported by various appeals to
calculation of a ``computable function'' or ``decidable predicate'' or
``recursive function'' by a Turing Machine
\cite{zbMATH02522537,Kleene52,Curry58combinatorylogic,Curry72combinatorylogic,jay2011}.
Indeed, since Turing \cite{zbMATH02522537}, Kleene \cite{Kleene52}, Curry
\cite{Curry58combinatorylogic,Curry72combinatorylogic}
and others showed that Turing Machines can encode $\l$-calculus, general recursive functions, 
and combinatory logic \cite{685558}, respectively,
any language that can encode any of these can be considered to be able to
do computation.
However, these rely upon their encoding of Turing Machines which typically
involve the simulation of a Turing Machine within the other language.

The typical simulation of a Turing Machine, say in $\l$-calculus, is to
represent the tape as a list, and the symbols by natural numbers using G\"odelisation.
The operations of the Turing Machine are then handled by a function that
can operate over the list (encoded tape) and compare the numbers using their
Church encodings \cite{Barendregt85}.
While such encodings preserve computation they have some weaknesses.
The encoded computation takes many more reductions to produce the same operations;
recognising a symbol requires a predecessor function, testing for zero, and then
switching on numbers to determine the next symbol to write, all before
reconstructing the list (encoded tape).
Such encodings are not very clear; the representation of a symbol $s$ may be
mapped to some number $i$ that is then represented as a function that is the
$i$th iterator.
These kinds of encodings are also metamathematical \cite{Tarski56logic} in nature and
so are always at least one level of mathematics away from the computation itself,
which can lead to misunderstandings about the true expressiveness of a
language \cite{jay2011}.

Process calculi are often considered to generalise the sequential (non-concurrent)
computation of $\l$-calculus by some form of encoding
\cite{90426,Berry:1989:CAM:96709.96717,Milner:1992:CMP:162037.162038,Milner:1992:CMP:162037.162039,DBLP:conf/fossacs/CardelliG98,DBLP:books/daglib/0098267,
705654}.
These encodings again have weaknesses such as adding reductions steps,
lacking clarity, or even limiting reductions strategies (such as in
Milner's encoding of $\l$-calculus into $\pi$-calculus \cite{90426}, which is
then built upon by those who use encoding $\pi$-calculus to capture
computation).
Further, these encodings are often up to some weak behavioural equivalence and
can create many dead processes as a side effect.
Thus a Turing Machine can be encoded into $\l$-calculus and then encoded into
$\pi$-calculus and then encoded into another process calculus so that the 
original computation is now buried three levels of meta-operations deep,
with almost no obvious connection to the original Turing Machine,
and only weakly behaviourally equivalent to an encoding of the Turing Machine
after the computation.

This paper attempts to avoid the worst of these various encoding issues by
presenting a straightforward approach to encoding a Turing Machine into any process
calculus that supports {\em intensional} communication
\cite{GivenWilsonPhD,GivenWilsonGorla13}.
Intensionality is the capability for functions or communication primitives
to operate based upon the internal structure of terms that are arguments
or being communicated, respectively \cite{jay2011,GivenWilsonPhD}.
Some recent process calculi support intensionality, in particular
Concurrent Pattern Calculus \cite{GivenWilsonGorlaJay10,givenwilson:hal-00987578} and
Psi Calculi \cite{BJPV11}.
The presentation here will use a simplified {\em asymmetric concurrent
pattern calculus} (ACPC) to detail the translation as clearly as possible.

The intensionality of ACPC is an advanced form of pattern-matching
that allows {\em compound} structures of the form $s\bullet t$ to be bound to a single
name, or to have their structure and components be matched in communication.
For example, consider the following processes:
\begin{equation*}
P \ \define\  \oap {a\bullet b}\zero\qquad\qquad
Q \ \define\  \iap {\l x\bullet \l y} Q'\qquad\qquad
R \ \define\  \iap {\l z} R'\qquad\qquad
S \ \define\  \iap {a\bullet b}S'
\end{equation*}
where $P$ is an output of the compound $a\bullet b$.
The inputs of $Q$ and $R$ have binding names of the form $\l x$ in their
patterns $\l x\bullet \l y$ and $\l z$, respectively.
The input of $S$ tests the names $a$ and $b$ for equality and performs no binding.
These process can be combined to form three possible reductions:
\begin{eqnarray*}
P\bnf Q &\redar& \{a/x,b/y\}Q'\\
P\bnf R &\redar& \{a\bullet b/z\}R'\\
P\bnf S &\redar& S'\; .
\end{eqnarray*}
The first matches the structure of the output of $P$ with the input of $Q$ and binds
$a$ and $b$ to $x$ and $y$, respectively, in $Q'$.
The second binds the entire output of $P$ to the single name $z$ in $R'$.
The third matches the structure and names of the output of $P$ with the structure
and names of the input of $S$, as they match they interact although no binding
or substitution occurs.

The encoding presented here exploits the ability to represent symbols and structures
into the output of a process, and to test for structure, equality, and to bind in
an input to clearly represent a Turing Machine.
Indeed, the encoding is {\em faithful} in that each operation that is performed
by a Turing Machine yields exactly one reduction in the encoding to the (structurally)
equivalent encoded Turing Machine after the operation.
The key to the elegance of the encoding is to represent the current state $q$ and
tape ${\mathcal T}$ of the Turing Machine by an output of the form
\begin{equation*}
\oap {q\bullet\encode {{\mathcal T}}} \zero
\end{equation*}
where $\encode{{\mathcal T}}$ converts the tape to a convenient format.
The transitions functions of the Turing Machine are then each encoded into
a process of the form
\begin{equation*}
\iap {q_i\bullet p}  \oap {q_j\bullet t} \zero
\end{equation*}
where $q_i$ and $p$ match the current state, current symbol, and the structure of the
tape, and the output exhibits the new state $q_j$ and modified representation of the tape $t$.
These transition functions can then be combined via parallel composition and replication
in a manner that allows for a faithful encoding of a Turing Machine.

The elegance of this encoding can be built upon by considering some variations.
It is straightforward to modify the encoding so that a tape with infinite blank symbols
at the edges can be represented by a finite term in the output.
Both of these encodings can then be easily captured by both Concurrent Pattern Calculus
and Psi Calculi with only minor changes to the encoding and proofs.

There are two limitations for non-intensional process calculi.
First is the inability to match multiple names in a single reduction.
This can be worked around by encodings or match-rules that rely upon structural equivalence,
however at some cost to the elegance of the encoding.
Second proves impossible to fix; the inability for non-intensional calculi to bind
an arbitrarily complex structure to a single name and still access the components.
Thus alternative approaches must be used to encode Turing Machines into, say,
synchronous polyadic $\pi$-calculus, at the cost of faithfulness and easy equivalence
of encodings.

\medskip

The structure of the paper is as follows.
Section~\ref{sec:tm} recalls Turing Machines.
Section~\ref{sec:int} presents intensionality via asymmetric concurrent pattern calculus.
Section~\ref{sec:enc} defines the encoding of Turing Machines in asymmetric concurrent pattern calculus.
Section~\ref{sec:var} considers variations on the encoding, including into published calculi.
Section~\ref{sec:fail} discusses non-faithful encodings into, and limitations of other calculi.
Section~\ref{sec:conc} draws conclusions.

\section{Turing Machines}
\label{sec:tm}

Each Turing Machine is defined by an alphabet, a set of states, a transition function,
and a tape. In addition during operation the current head position and current state
must also be accounted for.
The alphabet ${\mathcal S}$ is the set of symbols $s$ recognised by the Turing Machine and includes
a special {\em blank} symbol $b$.
The set of states ${\mathcal Q}$ is a collection of states $q$ that the Turing Machine can transition
to and from, and includes a {\em start state} $q_0$.
The transition function ${\mathcal F}$ is represented by tuples of the form
$\tuple{q_1,s_2,s_3,d_4,q_5}$ that instructs the machine when the current state is $q_1$ and the
current head position symbol is $s_2$ to write (to the current head position) $s_3$ and
then move the current head position direction $d_4$ (either $L$ for left or $R$ for right),
and change the current state to $q_5$.
The {\em tape} ${\mathcal T}$ is an infinite sequence of cells each of which contains a symbol,
this is denoted by $\tape{\ldots b,s_1,s_2,s_1,b\ldots}$ that indicates an infinite sequence
of blanks, the symbols $s_1$ then $s_2$ then $s_1$ and then an infinite sequence of blanks.
The current head position can be represented by marking the pointed to symbol in bold,
e.g.~the tape $\tape{\ldots b,{\bf s_1},s_2,s_1,b\ldots}$ indicates that the current head position is
the leftmost instance of $s_1$.
Thus the definition of a Turing Machine can be given by
$\tmach{{\mathcal S}, {\mathcal Q}, {\mathcal F}, {\mathcal T}, q_i}$ where
$q_i$ is the current state.

For a Turing Machine $\tmach{{\mathcal S}, {\mathcal Q}, {\mathcal F}, {\mathcal T}, q_i}$
a state $q_i$ is a {\em terminating} state if there are no transitions of the
form $\tuple{q_i,s_1,s_2,d_3,q_j}$.
A Turing Machine is well formed if for every $q_i$ then either $q_i$ is terminating, or
for every symbol $s_j\in{\mathcal S}$ then there exists a transition of the
form $\tuple{q_i,s_j,s_1,d_2,q_3}$ for some $s_1$ and $d_2$ and $q_3$.
The rest of this paper shall only consider well formed Turing Machines
although no results rely upon this.

\subsection*{A Simple Example}

Consider the following simple example of a Turing Machine that accepts numbers
represented in unary and terminates with the current head on $1$ if the number is
even and $b$ if the number is odd.

The alphabet is given by ${\mathcal S}=\{b,1\}$ and the states by
${\mathcal Q}=\{q_0,q_1,q_2,q_3\}$.
The transition function ${\mathcal F}$ is defined as follows:
\begin{equation*}
\begin{array}{c}
\tuple{q_0,b,1,L,q_2}\\
\tuple{q_0,1,b,R,q_1}
\end{array}
\qquad\qquad
\begin{array}{c}
\tuple{q_1,b,b,L,q_3}\\
\tuple{q_1,1,b,R,q_0}
\end{array}
\qquad\qquad
\begin{array}{c}
\tuple{q_2,b,b,R,q_3}\ \ \\
\tuple{q_2,1,b,R,q_3}\; .
\end{array}
\end{equation*}
Observe that $q_3$ is a terminating state and so has not transitions that
begin in that state.

Now consider the Turing Machine given by
$\tmach{{\mathcal S}, {\mathcal Q}, {\mathcal F}, \tape{\ldots,b,{\bf 1},1,1,b,\ldots}, q_0}$, that is
the Turing Machine defined above with the current head position on the first $1$ of the
number three represented in unary.
The computations progress as follows:
\begin{eqnarray*}
\ &\ &\tmach{{\mathcal S}, {\mathcal Q}, {\mathcal F}, \tape{\ldots,b,{\bf 1},1,1,b,\ldots}, q_0}\\
&\redar&\tmach{{\mathcal S}, {\mathcal Q}, {\mathcal F}, \tape{\ldots,b,b,{\bf 1},1,b,\ldots}, q_1}\\
&\redar&\tmach{{\mathcal S}, {\mathcal Q}, {\mathcal F}, \tape{\ldots,b,b,b,{\bf 1},b,\ldots}, q_0}\\
&\redar&\tmach{{\mathcal S}, {\mathcal Q}, {\mathcal F}, \tape{\ldots,b,b,b,b,{\bf b},\ldots}, q_1}\\
&\redar&\tmach{{\mathcal S}, {\mathcal Q}, {\mathcal F}, \tape{\ldots,b,b,b,{\bf b},b,\ldots}, q_3}\; .
\end{eqnarray*}
Since $q_3$ is a terminating state the Turing Machine halts and has the current head position on a
blank as required.
A similar Turing Machine with a tape that represents two would have the following reductions
\begin{eqnarray*}
\ &\ &\tmach{{\mathcal S}, {\mathcal Q}, {\mathcal F}, \tape{\ldots,b,{\bf 1},1,b,\ldots}, q_0}\\
&\redar&\tmach{{\mathcal S}, {\mathcal Q}, {\mathcal F}, \tape{\ldots,b,b,{\bf 1},b,\ldots}, q_1}\\
&\redar&\tmach{{\mathcal S}, {\mathcal Q}, {\mathcal F}, \tape{\ldots,b,b,b,{\bf b},\ldots}, q_0}\\
&\redar&\tmach{{\mathcal S}, {\mathcal Q}, {\mathcal F}, \tape{\ldots,b,b,{\bf b},1,\ldots}, q_2}\\
&\redar&\tmach{{\mathcal S}, {\mathcal Q}, {\mathcal F}, \tape{\ldots,b,b,b,{\bf 1},\ldots}, q_3}\; .
\end{eqnarray*}
Since two is even this time the Turing Machine halts with the current head position on a $1$ symbol
as required.

\section{Intensional Process Calculi}
\label{sec:int}

Intensionality in process calculi is the ability for the structure of a term to be
determined, here during interaction.
This has appeared in communication primitives in some more recent process calculi such
as Concurrent Pattern Calculus and Psi Calculi.
Spi calculus supports intentional reductions, but not in a communication reduction
\cite{Abadi:1997:CCP:266420.266432}.
This section defines an {\em asymmetric concurrent pattern calculus} (ACPC) that
is a simple variation of Concurrent Pattern Calculus as described before \cite{GivenWilsonPhD}.
ACPC is trivial to represent in either Concurrent Pattern Calculus or Psi Calculi,
and so has been used here for clarity of the encoding, and to transfer the results
(details in Section~\ref{sec:var}).

Assume a countable collection of names denoted $m,m_1,m_2,n,n_1,\ldots$.
The {\em terms} of ACPC are given by
\begin{eqnarray*}
s,t &::=& n\ \bnf\ s\bullet t
\end{eqnarray*}
the {\em names} $n$, and {\em compounds} $s\bullet t$ that combine the two terms $s$ and $t$
into a single term.

The {\em patterns} of ACPC are defined by
\begin{eqnarray*}
p,q &::=& \l n\ \bnf\ n\ \bnf\ p\bullet q\; .
\end{eqnarray*}
The {\em binding names} $\l n$ play the r\^ole of inputs in the pattern.
The {\em name-match} $n$ is used to test for equality during interaction
\footnote{This corresponds to the protected names $\pro n$ of CPC in the
r\^ole they play. However, the syntax is chosen to mirror the variable names
$n$ of CPC since they more closely align with $\pi$-calculus and Psi Calculi syntax, and
later results for CPC can use either protected or variable names.}.
The {\em compound patterns} $p\bullet q$ combine the two patterns $p$ and $q$ into a single pattern.
Note that a well-formed pattern is one where each binding name appears only once, the
rest of this paper will only consider well formed patterns.

Substitutions, denoted $\sigma,\rho$, are finite mappings from names to terms.
Their domain and range are expected, with their names being the union of domain and range.

\renewcommand{\match}[2]{\{#1/\!\!/#2\}}

The key to interaction for ACPC is the matching $\match t p$ of the term $t$ against the pattern $p$
to generate a substitution $\sigma$
is defined as follows.
\begin{eqnarray*}
\match t {\l n} &\define& \{t/n\}\\
\match n n &\define& \{\}\\
\match {s\bullet t} {p\bullet q} &\define& \match s p \cup \match t q\\
\match t p &\mbox{undefined}& \mbox{otherwise.}
\end{eqnarray*}
Any term $t$ can be matched with a binding name $\l n$ to generate a substitution from the
binding name to the term $\{t/n\}$.
A single name $n$ can be matched with a name-match for that name $n$ to yield the
empty substitution.
A compound term $s\bullet t$ can be matched by a compound pattern $p\bullet q$ when
the components match to yield substitutions $\match s p=\sigma_1$ and $\match t q=\sigma_2$,
the resulting substitution is the unification of $\sigma_1$ and $\sigma_2$.
Observe that since patterns are well formed, the substitutions of components will always have
disjoint domain.
Otherwise the match is undefined.

The processes of ACPC are given by
\begin{eqnarray*}
P,Q &::=& \zero\BNF P\bnf Q\BNF !P\BNF \res n P\BNF \iap p P\BNF \oap t P\; .
\end{eqnarray*}
The null process, parallel composition, replication, and restriction are standard
from CPC (and many other process calculi).
The {\em input} $\iap p P$ has a pattern $p$ and body $P$, the binding names of the pattern
bind their instances in the body.
The {\em output} $\oap t P$ has a term $t$ and body $P$, like in $\pi$-calculus and Psi
Calculi there are no binding names or scope effects for outputs.
Note that an input $\iap p \zero$ may be denoted by $p$ and an output $\oap t \zero$
by $\oan t$ when no ambiguity may occur.

$\alpha$-conversion $=_\alpha$ is defined upon inputs and restrictions in the usual manner for 
Concurrent Pattern Calculus \cite{GivenWilsonGorlaJay10}.
The structural equivalence relation $\equiv$ is given by:
\begin{equation*}
\begin{array}{c}
P\bnf \zero \equiv P
\qquad\qquad\qquad
P\bnf Q \equiv Q \bnf P
\qquad\qquad\qquad
P\bnf (Q\bnf R) \equiv (P\bnf Q)\bnf R
\vspace{0.2cm} \\
P\equiv P'\quad\mbox{if}\ P=_\alpha P'
\qquad\qquad\qquad
\res a \zero \equiv \zero
\qquad\qquad\qquad
\res a \res b P\equiv \res b \res a P
\vspace{0.2cm} \\
P\bnf \res a Q\equiv \res a (P\bnf Q)\quad \mbox{if}\ a\notin{\sf fn}(P)
\qquad\qquad\qquad
!P\equiv P\bnf !P \; .
\end{array}
\end{equation*}

The application of a substitution $\sigma$ to a process $P$ denoted $\sigma P$ is
as usual with scope capture avoided by $\alpha$-conversion where required.

ACPC has a single interaction axiom given by:
\begin{eqnarray*}
\oap t P \BNF \iap q Q &\quad\redar\quad& P \BNF \sigma Q\qquad\qquad\qquad \match t q = \sigma\;.
\end{eqnarray*}
It states that when the term of an output can be matched with the pattern of an input to
yield the substitution $\sigma$ then reduce to the body of the output in parallel with
$\sigma$ applied to the body of the input.

\section{Encoding}
\label{sec:enc}

%% Which list encoding of sequences to use
%\newcommand{\enclist}[2]{#1} % Encode both as list left to right
\newcommand{\enclist}[2]{#2} % Encode in direction of the tape

This section presents a faithful encoding of Turing Machines into ACPC.
The key to the encoding is to capture the current state and tape as the term of
an output, and to encode the transition function as a process that will operate
upon the encoded tape.
The spirit to this kind of encoding has been captured before when encoding
combinatory logics into CPC \cite{GivenWilsonPhD}.

Consider the simple encodings $\ytrans \cdot _L$ and $\ytrans \cdot _R$ that take a sequence of symbols
and encodes them into a term as follows:
\enclist{
\begin{eqnarray*}
\ytrans{\ldots,s_3,s_2,s_1} _L &\define& s_1\bullet (s_2\bullet (s_3\bullet \ldots))\\
\ytrans{s_1,s_2,s_3,\ldots} _R &\define& s_1\bullet (s_2\bullet (s_3\bullet \ldots))\; .
\end{eqnarray*}
That is, $\ytrans \cdot _L$ encodes a sequence of symbols (to be from the left hand side of
the current head position) into a list starting with the symbol on the right (closest to the current
head position) at the head of the list.
Similarly, $\ytrans \cdot _R$ encodes a sequence of symbols (from the right hand side of
the current head position) into a list starting from the symbol on the left (again, closest to the current
head position).
}
{
\begin{eqnarray*}
\ytrans{\ldots,s_3,s_2,s_1} _L &\define& ((\ldots\bullet s_3)\bullet s_2)\bullet s_1\\
\ytrans{s_1,s_2,s_3,\ldots} _R &\define& s_1\bullet (s_2\bullet (s_3\bullet \ldots))\; .
\end{eqnarray*}
That is, $\ytrans \cdot _L$ encodes a sequence of symbols from right to left, compounding
on the left hand side.
Similarly, $\ytrans \cdot _R$ encodes a sequence of symbols from left to right,
compounding on the right hand side.
}

Now consider a tape that must be of the form
$\tape{\ldots,b,s_a,\ldots,s_g,{\bf s_h},s_i\ldots,s_j,b,\ldots}$.
That is, an infinite sequence of blanks, some sequence of symbols including the current
head position, and then another infinite sequence of blanks.
This can be encoded $\xtrans\cdot$ into a term by:
\begin{eqnarray*}
\xtrans{\ \tape{\ldots,b,s_a,\ldots,s_g,{\bf s_h},s_i\ldots,s_j,b,\ldots}\ }
&\define&
(\ytrans{\ldots,b,s_a,\ldots,s_g}_L)\bullet s_h\bullet (\ytrans {s_i\ldots,s_j,b,\ldots}_R)
\end{eqnarray*}
\enclist{
Observe that the result is a term of the form $a\bullet b\bullet c$
where $a$ is the encoding of the tape left of the current head position
(reversed so as to be ordered from right to left),
$b$ is the current head position, and $c$ is the encoding of the tape right of
the current head position.
In particular note that both $a$ and $c$ are of the form $s_1\bullet ss$ where
$s_1$ is the symbol next to the current head position and $ss$ is the rest of
the sequence in that direction.
}
{
Observe that the result is a term of the form $a\bullet b\bullet c$
where $a$ is the encoding of the tape left of the current head position,
$b$ is the current head position symbol,
and $c$ is the encoding of the tape right of the current head position.
In particular note that $a$ and $c$ are both compounds of their symbol closest
to the current head position and the rest of the sequence in their direction.
}
For now the encoding handles an infinite tape and produces an infinite term,
although this can be removed later without effect on the results (details in Section~\ref{sec:var}).

Now the current state $q_i$ and a tape ${\mathcal T}$ can be represented as an output by
\begin{equation*}
\oan {q_i\bullet \encode {\mathcal T}}\; .
\end{equation*}

\begin{lemma}
\label{lem:tape-no-red}
The representation $\oan {q_i\bullet \encode {\mathcal T}}$ of the state $q_i$
and tape ${\mathcal T}$ does not reduce.
\end{lemma}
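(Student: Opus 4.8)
The plan is to show that the term $\oan{q_i\bullet\encode{\mathcal T}}$, which unfolds to $\oap{q_i\bullet\encode{\mathcal T}}\zero$, is a process from which no interaction axiom can fire. The argument proceeds by inspecting the operational semantics of ACPC. The only reduction rule is the interaction axiom, which requires an output $\oap t P$ to be placed in parallel with an input $\iap q Q$ so that $\match t q$ is defined. Structural congruence may be applied before a reduction, so first I would characterise the processes structurally congruent to $\oap{q_i\bullet\encode{\mathcal T}}\zero$.

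The first step is therefore to observe that, up to $\equiv$, the process $\oap{q_i\bullet\encode{\mathcal T}}\zero$ is congruent only to processes of the form $\res{\tilde a}(\oap{q_i\bullet\encode{\mathcal T}}\zero \bnf \zero \bnf \cdots \bnf \zero)$ — that is, an output guard possibly wrapped in restrictions and padded with null processes — since none of the structural equivalence rules can introduce a parallel input, a replication containing an input, or otherwise alter the single top-level output guard. In particular there is no subterm of the shape $P \bnf \iap q Q$ available. A short induction on the derivation of $\equiv$ suffices here: each axiom either rearranges/absorbs $\zero$, reorders or reassociates parallel components, pushes restrictions, unfolds a replication (vacuous here, there is none), or applies $\alpha$-conversion (which only renames), so the multiset of non-$\zero$ top-level guarded processes is preserved and remains the singleton $\{\oap{q_i\bullet\encode{\mathcal T}}\zero\}$.

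The second step is then immediate: the interaction axiom needs both an output and an input at the top level in parallel, and no process congruent to our representation provides an input. Hence no instance of the axiom applies, and since $\redar$ is defined only via that axiom (closed under $\equiv$, parallel, and restriction contexts), the process does not reduce. One small point to dispatch is that $\encode{\mathcal T}$ is in general an infinite term; but this is harmless, since the argument never needs to compute or match against $\encode{\mathcal T}$ — it only uses the \emph{shape} of the process (a single output guard), and the syntactic class of $q_i\bullet\encode{\mathcal T}$ as a term is irrelevant to the absence of a reduction partner.

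The main obstacle, such as it is, lies in being careful about contexts and structural congruence: one must be sure that no application of $\equiv$ (especially the restriction-scope rule and the replication-unfolding rule) can conjure an input process or expose a redex. Making the structural-congruence induction precise — i.e. defining the invariant ``the only non-null top-level guarded subprocess is the given output'' and checking it is preserved by every axiom and closed under the congruence rules — is the one place where a little genuine care is required; everything else is a direct reading of the semantics.
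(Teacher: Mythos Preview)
Your proposal is correct and matches the paper's approach. The paper actually gives no explicit proof for this lemma, but the proof it supplies for the dual Lemma~\ref{lem:trans-no-red} (the encoded transition function consists only of inputs, so checking the structural congruence rules shows no output is available and the axiom cannot fire) is exactly the argument you spell out in the mirror direction; your version is simply more detailed about the structural-congruence invariant.
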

%% Reviewer 3 considered this not worthy of inclusion (or to be expanded, but this is IMHO
%% too trivial to be worth expanding).
%\begin{proof}
%By definition of the reduction axiom a single output cannot reduce.
%\end{proof}

With the current state and tape encoded into a term it remains to encode the
transition function in a manner that allows faithfulness.
Consider that each tuple of the transition function is of the form
$\tuple {q_i,s_1,s_2,d,q_j}$ for states $q_i$ and $q_j$, and symbols $s_1$ and $s_2$, and
for $d$ either $L$ or $R$. Thus we can encode $\encode\cdot$ such a tuple as a single process
as follows:
\enclist{
\begin{equation*}
\begin{array}{rcll}
\encode{\tuple {q_i,s_1,s_2,d,q_j}}
&\define&
\iap {q_i\bullet ((\l l_1\bullet \l l)\bullet s_1\bullet \l r)}
\oan {q_j\bullet (l\bullet l_1\bullet (s_2\bullet r))}\qquad&d=L\\
%%%%%%
\encode{\tuple {q_i,s_1,s_2,d,q_j}}
&\define&
\iap {q_i\bullet (\l l\bullet s_1\bullet (\l r_1\bullet \l r))}
\oan {q_j\bullet ((s_2\bullet l)\bullet r_1\bullet r)}\qquad&d=R\; .
\end{array}
\end{equation*}
}
{
\begin{equation*}
\begin{array}{rcll}
\encode{\tuple {q_i,s_1,s_2,d,q_j}}
&\define&
\iap {q_i\bullet ((\l l\bullet \l l_1)\bullet s_1\bullet \l r)}
\oan {q_j\bullet (l\bullet l_1\bullet (s_2\bullet r))}\qquad&d=L\\
%%%%%%
\encode{\tuple {q_i,s_1,s_2,d,q_j}}
&\define&
\iap {q_i\bullet (\l l\bullet s_1\bullet (\l r_1\bullet \l r))}
\oan {q_j\bullet ((l\bullet s_2)\bullet r_1\bullet r)}\qquad&d=R\; .
\end{array}
\end{equation*}
}
Note that in both cases the pattern matches on the state $q_i$ and the symbol $s_1$.
When the tape is going to move left then the first symbol to the left is bound to $l_1$
and the rest to $l$, and $r_1$ and $r$ when respectively moving right.
The output in each case is the new state $q_j$ and the tape with the written symbol
$s_2$ added to the right side of the head position when moving left, or the left side
when moving right.
Thus, the new output represents the updated state and tape after the transition
$\tuple {q_i,s_1,s_2,d,q_j}$ has been applied once.
Note that the encoding here assumes the four names $l$ and $l_1$ and $r$ and $r_1$ do
not appear in the symbols ${\mathcal S}$ of the encoded Turing Machine.
Since the collection of symbols ${\mathcal S}$ is finite it is always possible to
choose four such names.

Building upon this, the encoding of the transition function ${\mathcal F}$ can be
done.
Define $\prod_{x\in S} P(x)$ to be the parallel composition of processes $P(x)$ in
the usual manner.
Now the encoding of the transition function $\encode{{\mathcal F}}$ can be captured as follows:
\begin{eqnarray*}
\encode{\mathcal F} &\define& 
\prod _{u\in{\mathcal F}} \ !\encode u
\end{eqnarray*}
where $u$ is each tuple of the form $\tuple {q_i,s_1,s_2,d,q_j}$.
Observe that this creates a process of the form $!P_1\bnf !P_2\bnf \ldots$ where each
$P_i$ performs a single transition.

\begin{lemma}
\label{lem:trans-no-red}
The encoding $\encode{{\mathcal F}}$ of the transition function ${\mathcal F}$
does not reduce.
\end{lemma}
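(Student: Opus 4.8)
The plan is to show that $\encode{{\mathcal F}}$, as a parallel composition of replicated inputs, contains no output subprocess at all, and therefore no interaction axiom can fire. First I would recall the shape of the encoding: $\encode{{\mathcal F}} = \prod_{u\in{\mathcal F}} \ !\encode u$, where each $\encode u$ is, by definition, an input process of the form $\iap p {\oan{t}}$ for some pattern $p$ and term $t$ depending on the tuple $u$. The only reduction rule in ACPC is the interaction axiom, which requires a top-level output $\oap t P$ in parallel with a top-level input $\iap q Q$ such that $\match t q$ is defined. So the heart of the argument is that no output is ever exposed at the top level of $\encode{{\mathcal F}}$, up to structural equivalence.

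The key steps, in order, are: (1) observe that each $\encode u$ is syntactically an input prefix $\iap{p_u}{Q_u}$ — the output $\oan{q_j\bullet\ldots}$ occurs only \emph{under} the input prefix, hence is guarded and not available for interaction; (2) note that $!\encode u \equiv \encode u \bnf \ !\encode u$, so unfolding replication only ever produces more copies of the guarded input, never an unguarded output; (3) conclude that $\encode{{\mathcal F}}$ is structurally equivalent only to parallel compositions of guarded inputs (and replications thereof), none of which matches the left-hand side of the interaction axiom, which needs an output in parallel with an input. Since reduction is closed under structural equivalence, and no term structurally equivalent to $\encode{{\mathcal F}}$ has the form $\oap t P \bnf \iap q Q \bnf R$, there is no reduction. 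This is essentially the same reasoning as for Lemma~\ref{lem:tape-no-red}, but with the roles reversed: there the obstacle was having only an output and no input; here it is having only inputs and no output.

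The step I expect to require the most care is (3): making precise that \emph{no} process structurally equivalent to $\encode{{\mathcal F}}$ exposes an output at top level. This needs a small induction on the structural equivalence rules — one checks that each generating equation ($P\bnf\zero\equiv P$, commutativity and associativity of $\bnf$, $\alpha$-conversion, the restriction laws, and $!P\equiv P\bnf!P$) preserves the invariant ``every top-level, non-restricted, non-replicated summand is an input prefix.'' The restriction cases are vacuous since $\encode{{\mathcal F}}$ contains no $\res{}{}$, and the replication case is exactly step (2). Once this invariant is established it follows immediately that the interaction axiom is inapplicable, so $\encode{{\mathcal F}}$ does not reduce.
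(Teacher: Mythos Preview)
Your proposal is correct and follows essentially the same approach as the paper: observe that each $\encode u$ is an input, so no output is available at top level and the interaction axiom cannot fire. The paper's proof is terser, simply noting that a check of the structural congruence rules shows no outputs arise, whereas you spell out the invariant and the induction more explicitly; but the argument is the same.
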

\begin{proof}
For every tuple $u$ in ${\mathcal F}$ the encoding $\encode u$ is an input.
It is then straightforward to consider all the structural congruence rules
and show that there are no outputs and thus the reduction axiom cannot be satisfied.
\end{proof}

Finally, the encoding $\encode\cdot$ of a Turing Machine into ACPC is given by:
\begin{eqnarray*}
\encode{\tmach{{\mathcal S}, {\mathcal Q}, {\mathcal F}, {\mathcal T}, q_i}}
&\quad\define\quad&
\oan {q_i\bullet \encode {\mathcal T}}
\BNF
\encode{\mathcal F}\; .
\end{eqnarray*}

\begin{lemma}
\label{lem:faithful}
Given a Turing Machine $\tmach{{\mathcal S}, {\mathcal Q}, {\mathcal F}, {\mathcal T}, q_i}$
then
\begin{enumerate}
\item If there is a transition
$\tmach{{\mathcal S}, {\mathcal Q}, {\mathcal F}, {\mathcal T}, q_i}\redar
\tmach{{\mathcal S}, {\mathcal Q}, {\mathcal F}, {\mathcal T}', q_j}$
then there is a reduction
$\encode{\tmach{{\mathcal S}, {\mathcal Q}, {\mathcal F}, {\mathcal T}, q_i}}\redar Q$
where $Q\equiv\encode{\tmach{{\mathcal S}, {\mathcal Q}, {\mathcal F}, {\mathcal T}', q_j}}$, and
\item if there is a reduction
$\encode{\tmach{{\mathcal S}, {\mathcal Q}, {\mathcal F}, {\mathcal T}, q_i}}\redar Q$
then $Q\equiv\encode{\tmach{{\mathcal S}, {\mathcal Q}, {\mathcal F}, {\mathcal T}', q_j}}$ and
there is a transition
$\tmach{{\mathcal S}, {\mathcal Q}, {\mathcal F}, {\mathcal T}, q_i}\redar
\tmach{{\mathcal S}, {\mathcal Q}, {\mathcal F}, {\mathcal T}', q_j}$.
\end{enumerate}
\end{lemma}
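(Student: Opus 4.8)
The plan is to prove both directions by pinning down the \emph{unique} interaction available from a process of the canonical form $\oan{q_i\bullet\encode{\mathcal T}}\bnf\encode{\mathcal F}$ and reading off its effect on the encoded tape. By Lemma~\ref{lem:tape-no-red} the output $\oan{q_i\bullet\encode{\mathcal T}}$ does not reduce on its own, and by Lemma~\ref{lem:trans-no-red} neither does $\encode{\mathcal F}=\prod_{u\in{\mathcal F}}\,!\encode u$; since $\encode{\mathcal F}$ is a parallel product of replicated inputs, any reduction of $\encode{\tmach{{\mathcal S},{\mathcal Q},{\mathcal F},{\mathcal T},q_i}}$ must therefore be an interaction between the one output $\oan{q_i\bullet\encode{\mathcal T}}$ and a copy of one of the $\encode u$. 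Both parts rest on this.

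For part~(1), let the transition be witnessed by $u=\tuple{q_i,s,s',d,q_j}\in{\mathcal F}$, so $s$ is the current head symbol of ${\mathcal T}$. Since the tape always has infinitely many blanks at each edge I may write ${\mathcal T}=\tape{\ldots,c,{\bf s},e,\ldots}$, and by the definition of the encoding $\encode{\mathcal T}=\ell\bullet s\bullet\rho$ with $\ell=\ytrans{\ldots,c}_L=\ell'\bullet c$ and $\rho=\ytrans{e,\ldots}_R=e\bullet\rho'$ for suitable $\ell',\rho'$. Using $\encode{\mathcal F}\equiv\encode u\bnf\encode{\mathcal F}$ (one copy extracted via $!\encode u\equiv\encode u\bnf\,!\encode u$) gives $\encode{\tmach{{\mathcal S},{\mathcal Q},{\mathcal F},{\mathcal T},q_i}}\equiv\oan{q_i\bullet\encode{\mathcal T}}\bnf\encode u\bnf\encode{\mathcal F}$. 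Then I split on $d$: for $d=L$ the match $\match{q_i\bullet\encode{\mathcal T}}{q_i\bullet((\l l\bullet\l l_1)\bullet s\bullet\l r)}$ is defined and equals $\{\ell'/l,\,c/l_1,\,\rho/r\}$, so the interaction axiom gives $Q\equiv\oan{q_j\bullet(\ell'\bullet c\bullet(s'\bullet\rho))}\bnf\encode{\mathcal F}$ (the residual $\zero$ absorbed by $P\bnf\zero\equiv P$, and $\encode{\mathcal F}$ left unchanged up to $\equiv$). It then only remains to check the algebraic identity $\ell'\bullet c\bullet(s'\bullet\rho)=\encode{{\mathcal T}'}$ for ${\mathcal T}'=\tape{\ldots,{\bf c},s',e,\ldots}$, which follows by unfolding $\ytrans\cdot_L$ and $\ytrans\cdot_R$ once: the new left sequence is $\ell'$, the new head symbol is $c$, and the new right sequence is $\ytrans{s',e,\ldots}_R=s'\bullet\rho$. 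The case $d=R$ is symmetric, using the other clause of $\encode u$ and the identity $\ytrans{\ldots,c,s'}_L=\ell\bullet s'$.

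For part~(2), given $\encode{\tmach{{\mathcal S},{\mathcal Q},{\mathcal F},{\mathcal T},q_i}}\redar Q$, I invoke the observation above to conclude that the reduction is an interaction of $\oan{q_i\bullet\encode{\mathcal T}}$ with a copy of some $\encode u$, $u\in{\mathcal F}$, say along the pattern $p$ of $\encode u$. Since $p$ name-matches both the state and the current head symbol, for $\match{q_i\bullet\encode{\mathcal T}}{p}$ to be defined the first component of $u$ must be $q_i$ and its second component must be the current head symbol of ${\mathcal T}$ --- that is, $u$ is exactly a transition enabled at the current configuration. Hence there is a transition $\tmach{{\mathcal S},{\mathcal Q},{\mathcal F},{\mathcal T},q_i}\redar\tmach{{\mathcal S},{\mathcal Q},{\mathcal F},{\mathcal T}',q_j}$ with ${\mathcal T}'$ and $q_j$ read off from $u$, and the computation of part~(1) gives $Q\equiv\encode{\tmach{{\mathcal S},{\mathcal Q},{\mathcal F},{\mathcal T}',q_j}}$.

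I expect the main obstacle to be the reduction-shape claim underlying the observation in part~(2): one must run through the structural congruence rules together with $\alpha$-conversion and confirm that rearranging parallel components and unfolding replications in $\oan{q_i\bullet\encode{\mathcal T}}\bnf\encode{\mathcal F}$ cannot expose any output other than $q_i\bullet\encode{\mathcal T}$, nor a second interacting pair, so that Lemmas~\ref{lem:tape-no-red} and~\ref{lem:trans-no-red} genuinely isolate the redex. The remaining delicacy, present in both directions, is bookkeeping rather than mathematics: keeping the fixed parenthesisation of $\bullet$ and the left/right orientation conventions of $\ytrans\cdot_L$ and $\ytrans\cdot_R$ aligned, since a single misplaced association would break the equality between the substituted output body and $\oan{q_j\bullet\encode{{\mathcal T}'}}$.
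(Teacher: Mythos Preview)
Your proposal is correct and follows essentially the same approach as the paper's proof: both directions isolate the unique redex using Lemmas~\ref{lem:tape-no-red} and~\ref{lem:trans-no-red}, extract a copy of the relevant $\encode u$ via structural congruence, and then verify the match/substitution against the tape encoding by a case split on $d$. You spell out the substitution and the $\ytrans\cdot_L/\ytrans\cdot_R$ bookkeeping more explicitly than the paper does, but the argument is the same (a minor stylistic point: your use of $e$ for the right-neighbour symbol clashes with the paper's later reserved name $e$ for the tape edge).
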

\begin{proof}
The first part is proven by examining the tuple $u$ that corresponds to the transition 
$\tmach{{\mathcal S}, {\mathcal Q}, {\mathcal F}, {\mathcal T}, q_i}\redar
\tmach{{\mathcal S}, {\mathcal Q}, {\mathcal F}, {\mathcal T}', q_j}$
by the Turing Machine. This tuple must be of the form
$\tuple{q_i,s_1,s_2,d,q_j}$ and it must also be that ${\mathcal T}$ is of the form
$\tape{\ldots,s_m,{\bf s_1},s_n,\ldots}$.
Further, it must be that ${\mathcal T}'$ is either:
$\tape{\ldots,{\bf s_m},s_2,s_n,\ldots}$ when $d$ is $L$, or
$\tape{\ldots,s_m,s_2,{\bf s_n},\ldots}$ when $d$ is $R$.
Now $\encode{\tmach{{\mathcal S}, {\mathcal Q}, {\mathcal F}, {\mathcal T}, q_i}}$
is of the form
$\oan {q_i\bullet \encode {\mathcal T}} \BNF \encode{\mathcal F}$ and
by Lemmas~\ref{lem:tape-no-red} and \ref{lem:trans-no-red} neither $\oan {q_i\bullet \encode {\mathcal T}}$
nor $\encode{\mathcal F}$ can reduce, respectively.
Now by exploiting structural congruence gain that
$\oan {q_i\bullet \encode {\mathcal T}} \BNF \encode{\mathcal F}
\equiv
\oan {q_i\bullet \encode {\mathcal T}} \BNF \encode {\tuple{q_i,s_1,s_2,d,q_j}}
\BNF \encode{\mathcal F}$.
Then by the definition of matching and the reduction axiom is straightforward to show that
$\oan {q_i\bullet \encode {\mathcal T}} \BNF \encode {\tuple{q_i,s_1,s_2,d,q_j}}\redar
 \oan {q_j\bullet \encode {{\mathcal T}'}}$ and thus conclude.

The reverse direction is proved similarly by observing that the only possible reduction
$\encode{\tmach{{\mathcal S}, {\mathcal Q}, {\mathcal F}, {\mathcal T}, q_i}}\redar Q$
must be due to a tuple $\tuple{q_i,s_1,s_2,d,q_j}$ that is in the transition function ${\mathcal F}$
and the result follows.
\end{proof}

\begin{theorem}
\label{thm:done}
The encoding $\encode\cdot$ of a Turing Machine into ACPC; faithfully preserves reduction, and
divergence.
That is, given a Turing Machine $\tmach{{\mathcal S}, {\mathcal Q}, {\mathcal F}, {\mathcal T}, q_i}$
then it holds that:
\begin{enumerate}
\item there is a transition
$\tmach{{\mathcal S}, {\mathcal Q}, {\mathcal F}, {\mathcal T}, q_i}\redar
\tmach{{\mathcal S}, {\mathcal Q}, {\mathcal F}, {\mathcal T}', q_j}$
if and only if there is exactly one reduction
$\encode{\tmach{{\mathcal S}, {\mathcal Q}, {\mathcal F}, {\mathcal T}, q_i}}\redar Q$
where $Q\equiv\encode{\tmach{{\mathcal S}, {\mathcal Q}, {\mathcal F}, {\mathcal T}', q_j}}$, and
\item there is an infinite sequence of transitions
$\tmach{{\mathcal S}, {\mathcal Q}, {\mathcal F}, {\mathcal T}, q_i}\redar^\omega$
if and only if there is an infinite sequence of reductions
$\encode{\tmach{{\mathcal S}, {\mathcal Q}, {\mathcal F}, {\mathcal T}, q_i}}\redar^\omega$.
\end{enumerate}
\end{theorem}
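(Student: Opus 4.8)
The plan is to read the theorem as a packaging of Lemma~\ref{lem:faithful} together with two additional observations: that the reduction supplied by part~(1) of that lemma is the \emph{only} reduction available, and that the resulting step-for-step correspondence iterates to cover infinite sequences.

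For part~(1) the biconditional between a single transition and a single reduction, with $Q\equiv\encode{\tmach{{\mathcal S}, {\mathcal Q}, {\mathcal F}, {\mathcal T}', q_j}}$, is exactly Lemma~\ref{lem:faithful}; only the word \emph{exactly} needs extra work. I would show that $\encode{\tmach{{\mathcal S}, {\mathcal Q}, {\mathcal F}, {\mathcal T}, q_i}}=\oan {q_i\bullet \encode {\mathcal T}}\BNF\encode{\mathcal F}$ has at most one reduct up to $\equiv$. By Lemmas~\ref{lem:tape-no-red} and~\ref{lem:trans-no-red} neither parallel component reduces on its own, so any reduction is an interaction between the lone output $\oan {q_i\bullet \encode {\mathcal T}}$ and one copy of one input $\encode u$ occurring in $\encode{\mathcal F}=\prod_{u\in{\mathcal F}}\,!\encode u$. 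The pattern of $\encode{\tuple{q_i',s_1',s_2',d',q_j'}}$ opens with the name-match $q_i'$ and forces the encoded tape to be a compound whose middle component is the name $s_1'$; hence the matching of $q_i\bullet\encode {\mathcal T}$ against it is defined only when $q_i'=q_i$ and $s_1'$ is the current head symbol of ${\mathcal T}$. Because ${\mathcal F}$ is a function, at most one tuple has this shape; all replicated copies of it yield the same reduct up to $\equiv$, and the matching relation is deterministic, so the reduct is unique up to $\equiv$. If $q_i$ is terminating no such tuple exists, so there is no reduction, which matches the absence of a transition. Combining this with Lemma~\ref{lem:faithful} gives the ``exactly one'' biconditional.

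For part~(2) I would work with the relation $\mathcal{R}=\{(\tmach{{\mathcal S}, {\mathcal Q}, {\mathcal F}, {\mathcal T}, q},P)\mid P\equiv\encode{\tmach{{\mathcal S}, {\mathcal Q}, {\mathcal F}, {\mathcal T}, q}}\}$ and verify, using Lemma~\ref{lem:faithful} and the fact that $\redar$ is closed under $\equiv$ and under parallel context (a standard property of the reduction relation, already relied upon in the proof of Lemma~\ref{lem:faithful}), that it matches transitions and reductions step for step in both directions. The forward half of~(2) then follows by iterating this along the given infinite transition sequence from $\encode{\tmach{{\mathcal S}, {\mathcal Q}, {\mathcal F}, {\mathcal T}, q_i}}$; the backward half follows by iterating from the given infinite reduction sequence, at each stage transporting the next available reduction from the current process onto the $\equiv$-equivalent term $\encode{\tmach{\ldots}}$, applying Lemma~\ref{lem:faithful}(2), and continuing.

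I expect the main obstacle to be bookkeeping rather than depth: the reducts produced by Lemma~\ref{lem:faithful} are only structurally equivalent to encoded machines, never syntactically identical, so the backward divergence argument must repeatedly shift a reduction across a structural equivalence before re-applying Lemma~\ref{lem:faithful}(2), which is sound precisely because $\redar$ is preserved by $\equiv$. The other point needing care is making the functionality (determinism) of ${\mathcal F}$ explicit, since that is exactly what guarantees ``at most one matching input'' and hence the word ``exactly'' in the statement.
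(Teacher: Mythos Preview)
Your proposal is correct and follows the same approach as the paper: both parts are obtained from Lemma~\ref{lem:faithful}. The paper's own proof is a single sentence deferring everything to that lemma, so your write-up is strictly more detailed; in particular, your explicit argument that the interacting input is uniquely determined (using Lemmas~\ref{lem:tape-no-red}, \ref{lem:trans-no-red} and the functionality of~${\mathcal F}$) and your careful handling of the $\equiv$-transport in the divergence direction fill in steps the paper leaves to the reader.
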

\begin{proof}
Both parts can be proved by exploiting Lemma~\ref{lem:faithful}.
\end{proof}

Observe that this encoding of a Turing Machine into ACPC is not only faithful
and straightforward, but also up to structural congruence.
This is in contrast with the popular style of encoding $\l$-calculi into process
calculi that requires many reductions to simulate one $\l$-reduction, and the
equivalence of encoded terms/machines is only up to weak behavioural equivalence.
The simplicity and faithfulness here is gained by being able to directly render
the current state and tape as a single term, and the transition function as a
process that modifies the current state and tape in the same manner as the original
Turing Machine.

\section{Variations}
\label{sec:var}

This section considers variations to the encoding including:
representing the tape as a finite term,
encoding into Concurrent Pattern Calculus,
and encoding into Psi Calculi.

\subsection*{Finite Terms}

One potential concern is the infinite tape being represented as an infinite term
in ACPC. However, this can be done away with by adding an additional reserved name $e$
during the translation that does not appear in the symbols of the Turing Machine ${\mathcal S}$
and represents the edge of the tape.

Now $\ytrans \cdot _L$ and $\ytrans \cdot _R$ are modified to account for the
endless sequence of blank symbols $b$ as follows:
\enclist{
\begin{eqnarray*}
\ytrans{\ldots,b,s_i,\ldots,s_1} _L &\define& s_1\bullet(\ldots(s_i\bullet e))\\
\ytrans{s_1,\ldots,s_i,b,\ldots} _R &\define& s_1\bullet(\ldots(s_i\bullet e))\; .
\end{eqnarray*}}{
\begin{eqnarray*}
\ytrans{\ldots,b,s_i,\ldots,s_1} _L &\define& ((e\bullet s_i)\ldots)\bullet s_1\\
\ytrans{s_1,\ldots,s_i,b,\ldots} _R &\define& s_1\bullet(\ldots(s_i\bullet e))\; .
\end{eqnarray*}}
Here the endless blanks at the edge of the tape are simply replaced by $e$.
Otherwise the encoding of the state $q_i$ and tape ${\mathcal T}$ is the same.

\begin{lemma}
\label{lem:fin:tape-no-red}
The representation $\oan {q_i\bullet \encode {\mathcal T}}$ of the state $q_i$
and tape ${\mathcal T}$ does not reduce.
\end{lemma}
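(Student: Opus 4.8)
The plan is to reuse the argument for Lemma~\ref{lem:tape-no-red} essentially verbatim, since the only change in this variation is the shape of the encoded tape, and that shape is irrelevant to the claim. First I would observe that, regardless of whether $\ytrans\cdot_L$ and $\ytrans\cdot_R$ produce finite terms (using the reserved edge name $e$, chosen fresh with respect to ${\mathcal S}$) or infinite ones, the result $\encode{\mathcal T}$ is a term of ACPC, and hence $q_i\bullet\encode{\mathcal T}$ is a term. Consequently $\oan{q_i\bullet\encode{\mathcal T}}$ abbreviates the single output process $\oap{q_i\bullet\encode{\mathcal T}}\zero$.

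Next I would recall that ACPC has a single interaction axiom, which fires only on a parallel composition of an output with an input whose pattern matches the output's term. The process $\oap{q_i\bullet\encode{\mathcal T}}\zero$ is not a parallel composition at top level and contains no input subprocess, so the axiom is not directly applicable to it.

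It then remains to check that structural congruence cannot rearrange $\oap{q_i\bullet\encode{\mathcal T}}\zero$ into a process to which the axiom applies. Going through the rules of $\equiv$ --- the unit, commutativity and associativity laws for $\bnf$, the $\alpha$-conversion rule, the two rules for $\res\cdot$, scope extrusion, and $!P\equiv P\bnf !P$ --- none of them introduces an input, a restriction, or a replication into a process that has none, so every process structurally congruent to $\oap{q_i\bullet\encode{\mathcal T}}\zero$ is, up to re-bracketing with $\zero$ components, just that output again. Hence no reduction is possible, as required. I do not expect any genuine obstacle here: the argument is identical to that of Lemma~\ref{lem:tape-no-red} and is insensitive to the finite-term modification, which only affects the later faithfulness lemmas (where the pattern matching against $e$ must be rechecked).
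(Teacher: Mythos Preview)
Your proposal is correct and matches the paper's approach: the paper actually omits the proof of both this lemma and the original Lemma~\ref{lem:tape-no-red}, treating them as immediate, and your argument is precisely the dual of the short proof the paper gives for Lemma~\ref{lem:trans-no-red} (there every component is an input so no output is available; here the process is a lone output so no input is available, and structural congruence cannot manufacture one). Your observation that the finite-term modification to $\xtrans\cdot$ is irrelevant here is exactly right.
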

%\begin{proof}
%By definition of the reduction axiom a single output cannot reduce.
%\end{proof}
%

The encoding of tuples $\encode\cdot$ is now modified to account for $e$ given by:
\begin{equation*}
\begin{array}{rcll}
\encode{\tuple {q_i,s_1,s_2,d,q_j}}
&\define&\ \ \ 
!\iap {q_i\bullet ((\enclist{\l l_1\bullet \l l}{\l l\bullet \l l_1})\bullet s_1\bullet \l r)}
\oan {q_j\bullet (l\bullet l_1\bullet (s_2\bullet r))}\\ %\qquad&d=L\\
%%%%%%
%\encode{\tuple {q_i,s_1,s_2,d,q_j}}
%&\define&
\ &\ &\bnf
!\iap {q_i\bullet (e\bullet s_1\bullet \l r)}
\oan {q_j\bullet (e\bullet b\bullet (s_2\bullet r))}\qquad&d=L\\
%%%%%%
\encode{\tuple {q_i,s_1,s_2,d,q_j}}
&\define&\ \ \ 
!\iap {q_i\bullet (\l l\bullet s_1\bullet (\l r_1\bullet \l r))}
\oan {q_j\bullet ((\enclist{s_2\bullet l}{l\bullet s_2})\bullet r_1\bullet r)}\\ %\qquad&d=R\\
%%%%%%
%\encode{\tuple {q_i,s_1,s_2,d,q_j}}
%&\define&
\ &\ &\bnf
!\iap {q_i\bullet (\l l\bullet s_1\bullet e)}
\oan {q_j\bullet ((\enclist{s_2\bullet l}{l\bullet s_2})\bullet b\bullet e)}\qquad&d=R\; .
\end{array}
\end{equation*}
The encoding of a tuple now has two input processes in parallel and each under
a replication;
the first matching the original encoding, and the second detecting when the
transition would move the current head position to the edge of the tape.
The new one inserts a new blank $b$ in the output and shifts the edge $e$ along
one cell.
Observe that due to definition of the matching rule no output can interact
with both of these inputs (as no term can be matched with both patterns of the
form $\l m\bullet \l n$ and $e$).
The replications have been added so that structural congruence can be achieved in
the final results. This requires a change to the encoding of the transitions
function as follows:
\begin{eqnarray*}
\encode{\mathcal F} &\define& 
\prod _{u\in{\mathcal F}} \ \encode u
\end{eqnarray*}
where the replications are now left to the encoding of each tuple $\encode u$.

The rest of the results follow with minor alterations.

\begin{lemma}
\label{lem:fin:trans-no-red}
The encoding $\encode{{\mathcal F}}$ of the transition function ${\mathcal F}$
does not reduce.
\end{lemma}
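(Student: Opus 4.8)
The plan is to reuse the argument from the proof of Lemma~\ref{lem:trans-no-red}: show that $\encode{\mathcal F}$ contains no unguarded output, that this property is stable under structural congruence, and hence that the single interaction axiom --- which demands an output in parallel with an input --- can never be satisfied.

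First I would unfold $\encode{\mathcal F} = \prod_{u\in{\mathcal F}}\encode u$ and inspect $\encode u$ for $d = L$ and for $d = R$. In each case $\encode u$ is a parallel composition of two components, each of the form $!\iap p {\oan{t}}$, that is, a replication guarding an input whose body is a guarded output. So, syntactically, $\encode{\mathcal F}$ is a parallel composition of replicated input-guarded processes, with no restriction and no top-level output.

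Next I would verify that the property ``structurally congruent to a process with no unguarded output'' is preserved by every clause of $\equiv$: the parallel-composition rules (unit, commutativity, associativity) and $\alpha$-conversion clearly preserve it, the restriction rules never apply because $\encode{\mathcal F}$ contains no restriction, and $!P\equiv P\bnf !P$ preserves it because unfolding a replicated input-guarded process only yields more copies of that same process, each still with its output under a guard. Consequently every $Q$ with $\encode{\mathcal F}\equiv Q$ has all its outputs beneath an input or a replication.

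Finally, since the interaction axiom needs a redex of the form $\oap t P \bnf \iap q Q$ up to $\equiv$, and no such unguarded output is reachable, I would conclude that $\encode{\mathcal F}$ does not reduce. I expect no real obstacle: the only change from Lemma~\ref{lem:trans-no-red} is that each $\encode u$ now contributes two replicated inputs (including the new ``edge'' case) rather than one, and since both are input-guarded the structural-congruence argument goes through unchanged.
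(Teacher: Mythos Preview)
Your proposal is correct and follows essentially the same approach as the paper. The paper actually omits an explicit proof for this lemma, relying on the reader to adapt the proof of Lemma~\ref{lem:trans-no-red}; your sketch is precisely that adaptation, correctly noting that the only change is that each $\encode u$ now contributes two replicated input-guarded processes rather than one, which leaves the ``no unguarded output, hence no redex'' argument intact.
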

%\begin{proof}
%For every tuple $u$ in ${\mathcal F}$ the encoding $\encode u$ is a
%parallel composition of two inputs each under replication.
%It is then straightforward to consider all the structural congruence rules
%and show that there are no outputs and thus the reduction axiom cannot be satisfied.
%\end{proof}

\begin{lemma}
\label{lem:fin:faithful}
Given a Turing Machine $\tmach{{\mathcal S}, {\mathcal Q}, {\mathcal F}, {\mathcal T}, q_i}$
then
\begin{enumerate}
\item If there is a transition
$\tmach{{\mathcal S}, {\mathcal Q}, {\mathcal F}, {\mathcal T}, q_i}\redar
\tmach{{\mathcal S}, {\mathcal Q}, {\mathcal F}, {\mathcal T}', q_j}$
then there is a reduction
$\encode{\tmach{{\mathcal S}, {\mathcal Q}, {\mathcal F}, {\mathcal T}, q_i}}\redar Q$
where $Q\equiv\encode{\tmach{{\mathcal S}, {\mathcal Q}, {\mathcal F}, {\mathcal T}', q_j}}$, and
\item if there is a reduction
$\encode{\tmach{{\mathcal S}, {\mathcal Q}, {\mathcal F}, {\mathcal T}, q_i}}\redar Q$
then $Q\equiv\encode{\tmach{{\mathcal S}, {\mathcal Q}, {\mathcal F}, {\mathcal T}', q_j}}$ and
there is a transition
$\tmach{{\mathcal S}, {\mathcal Q}, {\mathcal F}, {\mathcal T}, q_i}\redar
\tmach{{\mathcal S}, {\mathcal Q}, {\mathcal F}, {\mathcal T}', q_j}$.
\end{enumerate}
\end{lemma}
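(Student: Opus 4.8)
The plan is to adapt the proof of Lemma~\ref{lem:faithful}, the extra work being a finer case analysis that accounts for the edge marker $e$ and for the fact that $\encode u$ now consists of two replicated inputs per tuple~$u$.

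For the first part, suppose the transition $\tmach{{\mathcal S}, {\mathcal Q}, {\mathcal F}, {\mathcal T}, q_i}\redar\tmach{{\mathcal S}, {\mathcal Q}, {\mathcal F}, {\mathcal T}', q_j}$ is witnessed by a tuple $u=\tuple{q_i,s_1,s_2,d,q_j}\in{\mathcal F}$, so that the current head of ${\mathcal T}$ reads the symbol $s_1$. First I would invoke Lemmas~\ref{lem:fin:tape-no-red} and~\ref{lem:fin:trans-no-red} to conclude that neither $\oan{q_i\bullet\encode{\mathcal T}}$ nor $\encode{\mathcal F}$ reduces alone, so any reduction is an interaction between the output and an input prefix of $\encode{\mathcal F}$. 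Then I would split on $d\in\{L,R\}$ and, within each, on whether the head moves into a recorded cell (the \emph{interior} case) or past the last recorded symbol into the region abstracted by $e$ (the \emph{edge} case). In each of the four sub-cases I would use $!P\equiv P\bnf!P$ to expose one copy of the appropriate input of $\encode u$ and observe that the definition of matching decomposes $\encode{\mathcal T}$ exactly along the structure dictated by the head position: in the interior case the relevant side of the tape term is a compound and is split by $\l l\bullet\l l_1$ (resp. $\l r_1\bullet\l r$), while in the edge case that side is the name $e$ and is matched by the name-match~$e$. I would then check that the term exhibited by the corresponding output is, up to the structural equivalence generated by the replications of $\encode{\mathcal F}$ (which is precisely why those replications were added), the representation $\oan{q_j\bullet\encode{{\mathcal T}'}}$, with the edge case inserting the freshly written blank $b$ and pushing $e$ out by one cell — matching the way ${\mathcal T}'$ differs from ${\mathcal T}$. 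Reassembling yields $\encode{\tmach{{\mathcal S}, {\mathcal Q}, {\mathcal F}, {\mathcal T}, q_i}}\redar Q$ with $Q\equiv\encode{\tmach{{\mathcal S}, {\mathcal Q}, {\mathcal F}, {\mathcal T}', q_j}}$.

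For the converse, any reduction of $\encode{\tmach{{\mathcal S}, {\mathcal Q}, {\mathcal F}, {\mathcal T}, q_i}}$ must, by the same two no-reduction lemmas, pair the single output $\oan{q_i\bullet\encode{\mathcal T}}$ with some unfolded input prefix of $\encode{\mathcal F}$. Matching the outermost name against $q_i$ pins down a tuple $\tuple{q_i,s_1,s_2,d,q_j}\in{\mathcal F}$, the inner name-match on $s_1$ forces $s_1$ to be the symbol currently read, and the shape of the remaining pattern (compound versus $e$) selects the interior/edge alternative consistently with ${\mathcal T}$; hence the reduction realises exactly the Turing Machine transition determined by that tuple, and the computation of part~1 identifies the residual up to $\equiv$.

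I expect the main obstacle to be the edge case. I must verify that the interior and edge inputs for a given direction are mutually exclusive — no term matches both a pattern $\l m\bullet\l n$ and the name $e$, so at most one can fire — and jointly exhaustive over the possible shapes of the relevant side of the tape term, which by construction of $\ytrans \cdot _L$ and $\ytrans \cdot _R$ is always either $e$ or a compound, so that a unique and correct reduction is always available; and I must confirm that the edge input's output genuinely reconstructs $\encode{{\mathcal T}'}$ once the tape window is extended by one blank cell, so the correspondence stays exact (up to $\equiv$) rather than merely behavioural. The remaining structural-congruence bookkeeping and the four essentially symmetric sub-case computations are routine.
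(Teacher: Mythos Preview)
Your proposal is correct and follows essentially the same approach as the paper: invoke Lemmas~\ref{lem:fin:tape-no-red} and~\ref{lem:fin:trans-no-red}, case-split on the direction $d$ and then on whether the relevant side of the encoded tape is the edge marker $e$ or a compound, and for the converse rely on the mutual exclusivity of the patterns $\l m\bullet\l n$ and $e$ to ensure a unique reduction. The paper's proof is slightly terser but structurally identical, including the explicit remark that no term can match both $\l l\bullet\l l_1$ and $e$.
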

\begin{proof}
The first part is proven by examining the tuple $u$ that corresponds to the transition 
$\tmach{{\mathcal S}, {\mathcal Q}, {\mathcal F}, {\mathcal T}, q_i}\redar
\tmach{{\mathcal S}, {\mathcal Q}, {\mathcal F}, {\mathcal T}', q_j}$
by the Turing Machine. This tuple must be of the form
$\tuple{q_i,s_1,s_2,d,q_j}$ and it must also be that ${\mathcal T}$ is of the form
$\tape{\ldots,s_m,{\bf s_1},s_n,\ldots}$.
Further, it must be that ${\mathcal T}'$ is either:
$\tape{\ldots,{\bf s_m},s_2,s_n,\ldots}$ when $d$ is $L$, or
$\tape{\ldots,s_m,s_2,{\bf s_n},\ldots}$ when $d$ is $R$.
Now $\encode{\tmach{{\mathcal S}, {\mathcal Q}, {\mathcal F}, {\mathcal T}, q_i}}$
is of the form
$\oan {q_i\bullet \encode {\mathcal T}} \BNF \encode{{\mathcal F}}$ and
by Lemmas~\ref{lem:fin:tape-no-red} and \ref{lem:fin:trans-no-red} neither $\oan {q_i\bullet \encode {\mathcal T}}$
nor $\encode{{\mathcal F}}$ can reduce, respectively.

By definition $\encode{\mathcal F}$ is of the form
$\encode{u}\bnf R$
for some process $R$. 
Now consider $d$.
\begin{itemize}
\item If $d$ is $L$ then consider the encoded tape $\encode{{\mathcal T}}$.
  \begin{itemize}
  \item If $\encode{{\mathcal T}}$ is of the form $e\bullet s_1\bullet t$
        then by definition of $\encode{u}$ and structural congruence $\encode{u}\equiv
        \iap {q_i\bullet (e\bullet s_1\bullet \l r)}\oan {q_j\bullet (e\bullet b\bullet (s_2\bullet r))}
        \bnf \encode{u}$ and thus there is a reduction
        $\oan {q_i\bullet \encode {\mathcal T}} \bnf
        \iap {q_i\bullet (e\bullet s_1\bullet \l r)}\oan {q_j\bullet (e\bullet b\bullet (s_2\bullet r))}
        \bnf \encode{u}\redar \oan {q_j\bullet(e\bullet b\bullet (s_2\bullet r))}\bnf \encode u$.
        It is straightforward to show that $\encode{{\mathcal T}'}=e\bullet b\bullet (s_2\bullet r)$
        and thus by structural congruence that
        $\oan {q_j\bullet(e\bullet b\bullet (s_2\bullet r))}\bnf \encode u\bnf R\equiv
        \oan {q_j\bullet \encode{{\mathcal T}'}}\bnf \encode{{\mathcal F}}$ and thus conclude.
  \item If $\encode{{\mathcal T}}$ is of the form
        $(\enclist{s_i\bullet s}{s\bullet s_i})\bullet s_1\bullet r$
        then take $\encode u\equiv
        \iap {q_i\bullet ((\enclist{\l l_1\bullet \l l}{\l l\bullet\l l_1})\bullet s_1\bullet \l r)}
        \oan {q_j\bullet (l\bullet l_1\bullet (s_2\bullet r))}
        \bnf \encode{u}$ and the rest is as in the previous case.
  \end{itemize}
\item If $d$ is $R$ then the proof is a straightforward adaptation of the $L$ case above.
\end{itemize}

The reverse direction is proved similarly by observing that the only possible reduction
$\encode{\tmach{{\mathcal S}, {\mathcal Q}, {\mathcal F}, {\mathcal T}, q_i}}\redar Q$
must be due to a tuple $\tuple{q_i,s_1,s_2,d,q_j}$ that is in the transition function ${\mathcal F}$
and the result follows. The only added complexity is to ensure that there is only one possible
reduction for a given current state and current head position symbol, this can be assured by
definition of the match rule excluding any term from matching with both patterns
$\enclist{\l l_1\bullet \l l}{\l l\bullet \l l_1}$ and $e$.
\end{proof}

\begin{theorem}
\label{thm:fin:done}
The encoding $\encode\cdot$ of a Turing Machine into ACPC; faithfully preserves reduction, and
divergence.
That is, given a Turing Machine $\tmach{{\mathcal S}, {\mathcal Q}, {\mathcal F}, {\mathcal T}, q_i}$
then it holds that:
\begin{enumerate}
\item there is a transition
$\tmach{{\mathcal S}, {\mathcal Q}, {\mathcal F}, {\mathcal T}, q_i}\redar
\tmach{{\mathcal S}, {\mathcal Q}, {\mathcal F}, {\mathcal T}', q_j}$
if and only if there is exactly one reduction
$\encode{\tmach{{\mathcal S}, {\mathcal Q}, {\mathcal F}, {\mathcal T}, q_i}}\redar Q$
where $Q\equiv\encode{\tmach{{\mathcal S}, {\mathcal Q}, {\mathcal F}, {\mathcal T}', q_j}}$, and
\item there is an infinite sequence of transitions
$\tmach{{\mathcal S}, {\mathcal Q}, {\mathcal F}, {\mathcal T}, q_i}\redar^\omega$
if and only if there is an infinite sequence of reductions
$\encode{\tmach{{\mathcal S}, {\mathcal Q}, {\mathcal F}, {\mathcal T}, q_i}}\redar^\omega$.
\end{enumerate}
\end{theorem}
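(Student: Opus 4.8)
The plan is to mirror the proof of Theorem~\ref{thm:done} exactly, now drawing on Lemma~\ref{lem:fin:faithful} in place of Lemma~\ref{lem:faithful}. Both statements are phrased identically, so the structure of the argument carries over without modification; the only content that has changed is the underlying faithfulness lemma for the finite-term encoding, which has already absorbed the extra bookkeeping about the edge marker $e$ and the replications attached to each tuple encoding.

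For part (1), the forward direction follows immediately from Lemma~\ref{lem:fin:faithful}(1): a transition of the Turing Machine produces a reduction to some $Q$ with $Q\equiv\encode{\tmach{{\mathcal S}, {\mathcal Q}, {\mathcal F}, {\mathcal T}', q_j}}$. To upgrade ``there is a reduction'' to ``there is exactly one reduction'' I would invoke the uniqueness remark already established inside the proof of Lemma~\ref{lem:fin:faithful}: by Lemmas~\ref{lem:fin:tape-no-red} and \ref{lem:fin:trans-no-red} neither the output $\oan{q_i\bullet\encode{\mathcal T}}$ nor $\encode{\mathcal F}$ reduces on its own, so every reduction of the composite must consume the single output against one input of one tuple encoding; and the match rule forbids any term from matching both the compound pattern $\enclist{\l l_1\bullet\l l}{\l l\bullet\l l_1}$ and the pattern $e$, so for the given current state and head symbol exactly one of the (at most two) input clauses of the relevant tuple is applicable, and by well-formedness of the Turing Machine at most one tuple begins with $(q_i,s_1,\ldots)$. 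Hence the reduction is unique. The converse direction of part (1) is exactly Lemma~\ref{lem:fin:faithful}(2).

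For part (2), I would argue by a straightforward induction/coinduction bridging the two $\redar^\omega$ relations using part (1). If the Turing Machine has an infinite transition sequence ${\mathcal T}_0,q_0\redar{\mathcal T}_1,q_1\redar\cdots$, then part (1) applied repeatedly yields reductions $\encode{\tmach{\ldots,{\mathcal T}_n,q_n}}\redar Q_n$ with $Q_n\equiv\encode{\tmach{\ldots,{\mathcal T}_{n+1},q_{n+1}}}$, and since $\equiv$ is preserved by reduction (a standard property of ACPC, inherited from CPC) these splice into an infinite reduction sequence. Conversely, an infinite reduction sequence from $\encode{\tmach{\ldots,{\mathcal T},q_i}}$ must, by the uniqueness half of part (1) at each step, track a corresponding infinite transition sequence of the Turing Machine; in particular no reduction is ever ``wasted'' on a dead subprocess, since Lemma~\ref{lem:fin:faithful}(2) shows every reduction of an encoded machine lands (up to $\equiv$) on another encoded machine and corresponds to a genuine transition.

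The only point requiring care — and the closest thing to an obstacle — is the exactness claim in part (1): one must be certain the finite-term encoding does not introduce spurious alternative reductions, for instance via the replications $!$ now sitting in front of each input clause, or via the interplay of the two clauses of a single tuple. The argument above handles this by appealing to the match-rule disjointness between $e$-patterns and compound binding patterns (already noted in Lemma~\ref{lem:fin:faithful}) together with the no-reduction Lemmas~\ref{lem:fin:tape-no-red} and \ref{lem:fin:trans-no-red}; replication only ever supplies fresh copies of the same input, so it cannot create a second distinct reduction from the single available output. Thus the whole theorem reduces to a short paragraph: cite Lemma~\ref{lem:fin:faithful} for existence and correspondence, cite the no-reduction lemmas and the match-rule disjointness for uniqueness, and lift to $\redar^\omega$ by iterating part (1) and using preservation of $\equiv$ under reduction.

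\begin{proof}
Both parts follow from Lemma~\ref{lem:fin:faithful} in the same way that Theorem~\ref{thm:done} follows from Lemma~\ref{lem:faithful}. For part~(1), existence of the reduction and the equivalence $Q\equiv\encode{\tmach{{\mathcal S}, {\mathcal Q}, {\mathcal F}, {\mathcal T}', q_j}}$, together with the converse implication, are exactly Lemma~\ref{lem:fin:faithful}. Uniqueness of the reduction holds because, by Lemmas~\ref{lem:fin:tape-no-red} and \ref{lem:fin:trans-no-red}, the only reductions of $\encode{\tmach{{\mathcal S}, {\mathcal Q}, {\mathcal F}, {\mathcal T}, q_i}}$ arise from the single output interacting with an input clause of some $\encode u$; the match rule prevents any term from matching both $e$ and a compound binding pattern, so at most one clause of the relevant tuple applies, and well-formedness of the Turing Machine gives at most one applicable tuple for the current state and head symbol, while replication only provides copies of the same input. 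For part~(2), iterating part~(1) and using that $\equiv$ is preserved by reduction, an infinite transition sequence of the Turing Machine yields an infinite reduction sequence of its encoding and conversely.
\end{proof}
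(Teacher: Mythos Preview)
Your proposal is correct and follows the same approach as the paper, which simply states that both parts can be proved by exploiting Lemma~\ref{lem:fin:faithful}; you have just spelled out more of the details, in particular the uniqueness argument and the iteration for part~(2). One small wording fix: well-formedness in this paper guarantees existence, not uniqueness, of a tuple for a given $(q_i,s_1)$; the ``exactly one reduction'' in the statement is per target configuration, and that follows already from the match-rule disjointness you cite together with the fact that $\mathcal F$ contains each tuple at most once.
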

\begin{proof}
Both parts can be proved by exploiting Lemma~\ref{lem:fin:faithful}.
\end{proof}

\subsubsection*{Concurrent Pattern Calculus}
\label{ssec:cpc}

%%%%%%%%%%%%%%%%%%%%%%%%%%%%%%%%%%%%%%%%%

The choice of using ACPC here rather than CPC is for simplicity in presentation.
This section recalls CPC and proves that the encodings hold in CPC as well.
CPC has a single class of {\em patterns} that combines both the terms and patterns of
ACPC defined as follows:
\begin{eqnarray*}
p,q &::=& \lambda x \BNF n \BNF \pro n\BNF p\bullet q\; .
\end{eqnarray*}
The {\em binding names} $\lambda x$ are as before.
The {\em variable names} $n$ can be used as both output (like the name terms
of ACPC) and equality tests (like the name-match of ACPC).
The {\em protected names} $\pro n$ are only equality tests (name-matches of ACPC).
{\em Compounds} $p\bullet q$ are as in ACPC.
A {\em communicable pattern} is a pattern that contains no binding or protected names.
\newcommand{\cpcmatch}[2]{\{ #1\pmatch #2\}}

Interaction CPC relies upon the {\em unification} $\cpcmatch p q$ of the patterns $p$ and $q$
to yield a pair of substitutions $(\sigma,\rho)$ and is 
defined by:
\begin{equation*}
\begin{array}{rcll}
\left.
\begin{array}{r}
\cpcmatch x x \\
\cpcmatch x {\pro x} \\
\cpcmatch {\pro x} x \\
\cpcmatch {\pro x} {\pro x}
\end{array}
\right\}
&\define& (\{\},\{\})\\
\cpcmatch {\lambda x} q &\define& (\{q/x\},\{\}) &q\mbox{\ is communicable}\\
\cpcmatch p {\lambda x} &\define& (\{\},\{p/x\}) &p\mbox{\ is communicable}\\
\cpcmatch {p_1\bullet p_2} {q_1\bullet q_2} &\define& (\sigma_1\cup\sigma_2,\rho_1\cup\rho_2)
& \cpcmatch {p_i} {q_i} = (\sigma_i,\rho_i)\ i\in\{1,2\}\\
\cpcmatch p q &\mbox{undefined}& \mbox{otherwise.}
\end{array}
\end{equation*}
The unification succeeds and yields empty substitutions when both patterns are the same
name and are both variable or protected.
If either pattern is a binding name and the other is communicable, then the communicable
pattern is bound to the binding name in the appropriate substitution.
Otherwise if both patterns are compounds then unify component-wise.
Finally, if all these fail then unification is undefined (impossible).

The process of CPC are given by:
\begin{eqnarray*}
P,Q &::=& \zero\BNF P|Q\BNF !P\BNF \res n P\BNF \iap p P\; .
\end{eqnarray*}
All are familiar from ACPC although the input and output are now both represented by
the {\em case} $\iap p P$ with pattern $p$ and body $P$.

The structural laws are the same as for ACPC with $\alpha$-conversion defined in the usual
manner \cite{GivenWilsonGorlaJay10,givenwilson:hal-00987578} and interaction is defined by the following
axiom:
\begin{eqnarray*}
\iap p P\bnf \iap q Q &\redar& (\sigma P)\bnf (\rho Q) \quad\quad\quad \cpcmatch p q = (\sigma,\rho)\; .
\end{eqnarray*}
It states that when two cases in parallel can unify their patterns to yield substitutions
$\sigma$ and $\rho$ then apply those substitutions to the appropriate bodies.

%%%%%%%%%%%%%%%%%%%%%%%%%%%%%%%%%%%%%%%%%

The encodings of Turing Machines into CPC are trivial, the only change is to remove
the overhead line from outputs, i.e.~$\oap t P$ becomes $\iap t P$ since all terms of
ACPC are patterns of CPC
\footnote{There is no need to convert syntax between ACPC and CPC, for example changing
name-matches from $n$ to $\pro n$ in patterns, as the unification rules for CPC allow for both.
Indeed, the encodings were chosen to allow this.
Although in theory
CPC could allow two ACPC outputs to interact, this does not occur for the encodings in this paper.}.
However some proofs require changes due to the change from input and
output with one sided matching, to CPC cases with pattern unification.
The proof of Lemma~\ref{lem:tape-no-red} is trivial. For Lemmas~\ref{lem:trans-no-red}
and \ref{lem:fin:trans-no-red}
the proof is resolved due to CPC unification only allowing a binding name $\l x$ to
unify with a communicable pattern.
The rest are effectively unchanged.

\begin{theorem}
\label{thm:cpc:done}
There is an encoding $\encode\cdot$ of a Turing Machine into CPC that;
faithfully preserves reduction, and
divergence.
That is, given a Turing Machine $\tmach{{\mathcal S}, {\mathcal Q}, {\mathcal F}, {\mathcal T}, q_i}$
then it holds that:
\begin{enumerate}
\item there is a transition
$\tmach{{\mathcal S}, {\mathcal Q}, {\mathcal F}, {\mathcal T}, q_i}\redar
\tmach{{\mathcal S}, {\mathcal Q}, {\mathcal F}, {\mathcal T}', q_j}$
if and only if there is exactly one reduction
$\encode{\tmach{{\mathcal S}, {\mathcal Q}, {\mathcal F}, {\mathcal T}, q_i}}\redar Q$
where $Q\equiv\encode{\tmach{{\mathcal S}, {\mathcal Q}, {\mathcal F}, {\mathcal T}', q_j}}$, and
\item there is an infinite sequence of transitions
$\tmach{{\mathcal S}, {\mathcal Q}, {\mathcal F}, {\mathcal T}, q_i}\redar^\omega$
if and only if there is an infinite sequence of reductions
$\encode{\tmach{{\mathcal S}, {\mathcal Q}, {\mathcal F}, {\mathcal T}, q_i}}\redar^\omega$.
\end{enumerate}
\end{theorem}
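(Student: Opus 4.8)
The plan is to take the CPC encoding to be the composition of the ACPC encoding of Section~\ref{sec:enc}, in its finite-term variant (or, if one prefers infinite terms, in its original form), with the purely syntactic translation that leaves every input, every name, and the constructors $\zero$, parallel composition, $!\cdot$ and $\res n\cdot$ unchanged, and replaces each output $\oap t P$ by the case $\iap t P$. This is well defined since every ACPC term is a CPC pattern, in fact a communicable one. It then suffices to re-establish the CPC analogues of Lemmas~\ref{lem:tape-no-red}--\ref{lem:faithful} (equivalently \ref{lem:fin:tape-no-red}--\ref{lem:fin:faithful}) for the translated encoding; once faithfulness in the form of Lemma~\ref{lem:faithful}/\ref{lem:fin:faithful} holds in CPC, both items of the theorem, including divergence, follow exactly as in the proof of Theorem~\ref{thm:done}/\ref{thm:fin:done}.

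First I would check the ``no reduction'' lemmas. The translated tape representation $\iap{q_i\bullet\encode{\mathcal T}}{\zero}$ is a single case whose pattern is communicable, and the CPC interaction axiom requires two cases in parallel, so it cannot reduce; this is the analogue of Lemma~\ref{lem:tape-no-red}/\ref{lem:fin:tape-no-red}. For $\encode{\mathcal F}$, each $\encode u$ is (a parallel composition of replications of) cases of pattern $q_i\bullet(\cdots)$, where the ellipsis always contains at least one binding name $\l x$ sitting in a compound position whose counterpart in any other transition case is again a binding name or a name-match. Since CPC unification succeeds on $\l x$ only against a communicable pattern, no two transition cases unify, so $\encode{\mathcal F}$ does not reduce, giving the analogue of Lemma~\ref{lem:trans-no-red}/\ref{lem:fin:trans-no-red}; the caveat of the preceding footnote that two ACPC outputs could in principle interact in CPC is vacuous here, since the encoded machine contains exactly one output at any time.

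Next I would redo the faithfulness argument. The key point is that, restricted to the pattern shapes that actually occur, CPC unification computes precisely the ACPC match: when the unique output $\iap{q_i\bullet\encode{\mathcal T}}{\zero}$ is placed in parallel with a transition case for the matching state and head symbol, $\cpcmatch{q_i\bullet\encode{\mathcal T}}{q_i\bullet(\cdots)}=(\{\},\rho)$ with $\rho$ exactly the ACPC substitution $\match{\encode{\mathcal T}}{(\cdots)}=\sigma$, and the CPC axiom applies $\rho$ to the transition body and the empty substitution to the body $\zero$ of the output. Hence the single reduction available in ACPC is available in CPC, producing, up to $\equiv$, $\iap{q_j\bullet\encode{{\mathcal T}'}}{\zero}\bnf\encode{\mathcal F}$; the case split on the direction $d$, and in the finite-term variant on whether the edge marker $e$ is adjacent to the head in $\encode{\mathcal T}$, is verbatim that of Lemma~\ref{lem:faithful}/\ref{lem:fin:faithful}. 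The converse is identical: the only case that can unify with the unique output is a transition case $\tuple{q_i,s_1,s_2,d,q_j}$ of ${\mathcal F}$ whose state and head symbol match, and the reserved names $l,l_1,r,r_1$ (and $e$) together with ${\mathcal Q}$ being disjoint from ${\mathcal S}$ force this case, and hence the reduction, to be unique.

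The main obstacle is exactly the symmetry of CPC unification: unlike ACPC's one-sided matching, CPC could a priori allow a transition case to unify with another transition case, or an output case with another output case, which would break both the ``no reduction'' lemmas and the uniqueness claim in the faithfulness lemma. All of these are ruled out by the single structural fact that a binding name unifies only with a communicable pattern while at any moment the encoded machine carries exactly one communicable (output) case and every transition case carries a binding name in each compound position that another transition case would have to meet; with that in hand, everything else transfers mutatis mutandis from Section~\ref{sec:enc} and the Finite Terms subsection, and divergence preservation is immediate from the faithfulness lemma as in Theorem~\ref{thm:done}/\ref{thm:fin:done}.
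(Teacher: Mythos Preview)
Your proposal is correct and follows essentially the same approach as the paper: translate each ACPC output $\oap t P$ to the CPC case $\iap t P$, then recheck the no-reduction and faithfulness lemmas, with the only genuinely new work being to rule out case--case interactions among transition processes via the fact that a binding name $\l x$ unifies only with a communicable pattern. The paper's treatment is terser but makes exactly these points, and your explicit identification of the symmetry of CPC unification as the sole obstacle, together with your justification for why it is harmless here, matches the paper's remarks preceding the theorem.
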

%\begin{proof}
%Both parts can be proved by exploiting Lemma~\ref{lem:fin:faithful}.
%\end{proof}

\medskip

%%%%%%%%%%%%%%%%%%%%%%%%%%%%%%%%%%%%%%%%%

\newcommand{\cheq}{\stackrel\cdot\leftrightarrow}
\newcommand{\terms}{{\bf T}}
\newcommand{\assertion}{{\bf A}}
\newcommand{\one}{{\bf 1}}
\newcommand{\compose}{\otimes}
\newcommand{\assert}[1]{\llparenthesis \, #1 \, \rrparenthesis}
\newcommand{\fram}[1]{{\cal F}(#1)}

\subsubsection*{Psi Calculi}

\newcommand{\psiap}[2]{\underline{#1}(#2)}
\newcommand{\psoap}[2]{\overline {#1}(#2)}

Similarly both encodings can be easily adapted for Psi Calculi \cite{BJPV11}.
Psi Calculi are parametrized with respect to two sets: terms $\terms$ defined by
\begin{eqnarray*}
M,N &::=& m\BNF M,N
\end{eqnarray*}
the names $m$ and the {\em pair} $M,N$ of two terms $M$ and $N$;
and assertions $\assertion$, ranged over by $\Psi$ (that play no significant r\^ole here).
The empty assertion is written $\one$.
Also assume two operators: channel equivalence, $\cheq \subseteq \terms \times \terms$,
and assertion composition, $\compose: \assertion \times \assertion \rightarrow \assertion$. 
It is also required that $\cheq$ is transitive and
symmetric, and that $(\compose,\one)$ is a commutative monoid.

Processes in Psi Calculi are defined as:
$$
P,Q\ ::=\ \zero\ \bnf\ %\ok \ \bnf \
 P|Q\ \bnf \ (\nu x)P\ \bnf \ !P\ \bnf \ 
 \psiap M {\lambda\wt x} N.P \ \bnf\ \psoap M N.P\ \bnf \ \assert\Psi
$$
exploiting the notation $\wt a$ for a sequence $a_1,\ldots,a_i$.
Most process forms are as usual with:
{\em input} $\psiap M {\lambda \wt x} N.P$ on channel $M$ and binding names $\wt x$ in the pattern $N$
and with body $P$;
and {\em output} $\psoap M N.P$ on channel $M$ and outputting term $N$.

The reduction relation semantics are given by
isolating the $\tau$ actions of the LTS given in \cite{BJPV11}.
To this aim, the definiton of frame of a process $P$, written $\fram P$,
as the set of unguarded assertions occurring in $P$. Formally:
$$
\fram{\assert \Psi} = \Psi
\qquad
\fram{(\nu x)P} = (\nu x)\fram P
\qquad
\fram{P|Q} = \fram P \compose \fram Q
$$
and is $\one$ in all other cases. Denote as $(\nu \wt b_P)\Psi_P$ the frame of $P$.
The structural laws are the same as in ACPC.
The reduction relation is inferred by the following axioms:
$$
\begin{array}{c}
\prooftree \Psi \vdash M \cheq N 
\justifies \Psi \triangleright \psoap M K.P\ |\ \psiap N {\lambda \wt x}H.Q \redar P\ |\ \{{\wt L}/{\wt x}\}Q 
\endprooftree\ K = H[\wt x := \wt L]

\qquad\qquad

\prooftree \Psi \triangleright P \redar P'
\justifies \Psi \triangleright (\nu x) P \redar (\nu x) P'
\endprooftree\ x \not\in \mbox{names}(\Psi)

\vspace*{.4cm}
\\

\prooftree \Psi \compose \Psi_Q \triangleright P \redar P'
\justifies \Psi \triangleright P\ |\ Q \redar P' \ |\ Q
\endprooftree\ \fram Q = (\nu \wt b_Q)\Psi_Q, \wt b_Q \mbox{ fresh for } \Psi \mbox{ and } P

\qquad\qquad

\prooftree P \equiv Q \quad \Psi \triangleright Q \redar Q' \quad Q' \equiv P'
\justifies \Psi \triangleright P \redar P'
\endprooftree

\end{array}
$$
The interesting axiom is the first that states when $M$ and $N$ are equivalent and
the term $K$ is equal to the term $H$ with each name in $x_i\in \wt x$ replaced by some $L_i$,
then reduce to $P$ in parallel with the substitution $\{L_i/x_i\}$ applied to $Q$.
Denote with $P \redar P'$ whenever $\one \triangleright P \redar P'$.

%%%%%%%%%%%%%%%%%%%%%%%%%%%%%%%%%%%%%%%%%

Both encodings of Turing Machines into ACPC can be easily adapted for Psi Calculi,
the following changes show how to do adapt the encodings for the infinite tape encoding.
All instances of the compounding operator $\bullet$ are replaced by the pair
operator ``$,$'', i.e.~all terms and patterns of the form $x\bullet y$ take the form $x,y$.
The encoding of the current state $q_i$ and tape ${\mathcal T}$ becomes:
\begin{equation*}
\psoap {q_i} {\ \encode {\mathcal T}\ }.\zero\; .
\end{equation*}
The encoding of the tuples becomes:
\begin{equation*}
\begin{array}{rcll}
\encode{\tuple {q_i,s_1,s_2,d,q_j}}
&\define&
\underline {q_i} (\l l_1,\l l, \l r)((\enclist{l_1,l}{l,l_1}), s_1, r) \ .\ 
\overline {q_j} (l, l_1, (s_2, r)).\zero\qquad&d=L\\
%%%%%%
\encode{\tuple {q_i,s_1,s_2,d,q_j}}
&\define&
\underline {q_i} (\l l,\l r_1,\l r)(l, s_1, (r_1, r)) \ . \ 
\overline {q_j} ((s_2, l), r_1, r).\zero\qquad&d=R\; .
\end{array}
\end{equation*}
From there the rest of the encoding remains the same and the results are
straightforward.

\begin{theorem}
\label{thm:psi:done}
The encoding $\encode\cdot$ of a Turing Machine into Psi Calculi;
faithfully preserves reduction, and
divergence.
That is, given a Turing Machine $\tmach{{\mathcal S}, {\mathcal Q}, {\mathcal F}, {\mathcal T}, q_i}$
then it holds that:
\begin{enumerate}
\item there is a transition
$\tmach{{\mathcal S}, {\mathcal Q}, {\mathcal F}, {\mathcal T}, q_i}\redar
\tmach{{\mathcal S}, {\mathcal Q}, {\mathcal F}, {\mathcal T}', q_j}$
if and only if there is exactly one reduction
$\encode{\tmach{{\mathcal S}, {\mathcal Q}, {\mathcal F}, {\mathcal T}, q_i}}\redar Q$
where $Q\equiv\encode{\tmach{{\mathcal S}, {\mathcal Q}, {\mathcal F}, {\mathcal T}', q_j}}$, and
\item there is an infinite sequence of transitions
$\tmach{{\mathcal S}, {\mathcal Q}, {\mathcal F}, {\mathcal T}, q_i}\redar^\omega$
if and only if there is an infinite sequence of reductions
$\encode{\tmach{{\mathcal S}, {\mathcal Q}, {\mathcal F}, {\mathcal T}, q_i}}\redar^\omega$.
\end{enumerate}
\end{theorem}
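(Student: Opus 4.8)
The plan is to instantiate Psi Calculi so that, on the image of the encoding, its interaction axiom collapses onto the ACPC interaction axiom, and then to replay Lemmas~\ref{lem:tape-no-red}, \ref{lem:trans-no-red} and \ref{lem:faithful} (hence Theorem~\ref{thm:done}) essentially verbatim. Concretely I would take the assertions to be the trivial commutative monoid $\assertion=\{\one\}$, so that every frame equals $\one$ and the frame side conditions of the parallel and restriction rules hold vacuously, and I would take channel equivalence $\cheq$ to be syntactic equality on $\terms$, which is an equivalence and in particular transitive and symmetric as required. Under these choices the only non-trivial reduction axiom becomes $\psoap M K.P \bnf \psiap N {\lambda \wt x}H.Q \redar P \bnf \{\wt L/\wt x\}Q$ whenever $M=N$ and $K=H[\wt x:=\wt L]$. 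Since the encoding always places a state name $q_i$ in the channel position and a tape term in the object position, $M=N$ is mere equality of state names, while $K=H[\wt x:=\wt L]$ is, for the patterns $H$ used by $\encode{\tuple{q_i,s_1,s_2,d,q_j}}$, exactly the ACPC matching condition $\match K H=\{\wt L/\wt x\}$: a non-binding symbol occurrence $s_1$ inside $H$ plays the r\^ole of an ACPC name-match and forces the corresponding component of $K$ to be that very name. Interaction in the Psi encoding therefore mirrors interaction in the ACPC encoding step for step.

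With that correspondence established I would first reprove the analogues of Lemmas~\ref{lem:tape-no-red} and \ref{lem:trans-no-red}: the representation $\psoap{q_i}{\encode{\mathcal T}}.\zero$ is a lone output with no parallel input and $\encode{\mathcal F}$ is a parallel composition of replicated inputs containing no output, so in neither case can the interaction axiom fire, and a pass over the structural laws (which are those of ACPC) rules out any reduction. Next I would reprove the analogue of Lemma~\ref{lem:faithful}. In the forward direction a Turing Machine transition is driven by some $\tuple{q_i,s_1,s_2,d,q_j}\in{\mathcal F}$ with ${\mathcal T}$ of the form $\tape{\ldots,s_m,{\bf s_1},s_n,\ldots}$; by structural congruence $\encode{\mathcal F}$ exposes an unguarded copy of $\encode{\tuple{q_i,s_1,s_2,d,q_j}}$, the unique output has channel $q_i$ equal to the input channel, and $\encode{\mathcal T}$ instantiates the pattern's binding names, yielding a reduction whose residual is, up to structural congruence, $\psoap{q_j}{\encode{{\mathcal T}'}}.\zero\bnf\encode{\mathcal F}$, i.e.\ the encoding of the machine after the transition. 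Conversely, any reduction of the encoded machine must pair the unique output with a matching input, which can only come from a tuple $\tuple{q_i,s_1,s_2,d,q_j}$ in ${\mathcal F}$ whose second component is the current head symbol; this is the only available reduction, and reading ${\mathcal T}'$ off the residual recovers the corresponding Turing Machine transition. Both clauses of the theorem then follow exactly as Theorem~\ref{thm:done} follows from Lemma~\ref{lem:faithful}; the finite-tape refinement of Section~\ref{sec:var} transfers identically, now also splitting on whether the head is at the edge and using that no term is an instance of both the atomic pattern $e$ and a pair pattern, as in the proof of Lemma~\ref{lem:fin:faithful}.

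The main obstacle is not any single step but the bookkeeping behind the opening claim that the Psi Calculi semantics genuinely coincides, on the image of the encoding, with the ACPC interaction axiom. One must verify that separating the channel $M$ from the communicated term $N$ --- which Psi does and ACPC does not --- is harmless because states and tape symbols come from disjoint name spaces and a bare state name is never communicated as an object; that the pattern-instantiation clause $K=H[\wt x:=\wt L]$ reproduces $\match\cdot\cdot$ on exactly the patterns in play, the non-binding symbol occurrences behaving as name-matches; and that the assertion and frame apparatus is inert under the trivial instantiation, so that $\one\triangleright P\redar P'$ is the only reduction judgement that ever arises and the structural-congruence closure rule contributes nothing beyond what it contributes in ACPC. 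Once this is pinned down, the lemmas and the theorem carry over with only the cosmetic changes already indicated: replacing $\bullet$ by the pairing ``$,$'' and the ACPC input/output prefixes by their Psi counterparts.
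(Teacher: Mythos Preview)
Your proposal is correct and follows the paper's approach, which is simply to transfer the ACPC argument across to Psi Calculi after the syntactic adaptations; indeed the paper gives no explicit proof for this theorem at all, merely stating that ``the rest of the encoding remains the same and the results are straightforward.'' Your write-up is therefore considerably more detailed than the paper's own treatment---in particular your explicit choice of the trivial assertion monoid and syntactic channel equivalence, and your check that pattern instantiation $K=H[\wt x:=\wt L]$ coincides with ACPC matching on the patterns that actually arise, spell out precisely the ``bookkeeping'' the paper leaves implicit.
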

%\begin{proof}
%Both parts can be proved by exploiting Lemma~\ref{lem:fin:faithful}.
%\end{proof}

\section{Limitations}
\label{sec:fail}

\newcommand{\ifte}[4]{{\sf if}\ #1=#2\ {\sf then}\ #3\ {\sf else}\ #4}
\newcommand{\piiap}[2]{#1(#2)}
\newcommand{\pioap}[2]{\overline{#1}\langle #2\rangle}

This section discusses the difficulties of attempting to faithfully encode Turing Machines
into non-intensional calculi, particularly $\pi$-calculi. (Here for synchronous polyadic $\pi$-calculus, but
adaptations for asynchronous or monadic variations are straightforward, although may require
more reductions.)
%%%%%%%%%%%%%%%%%%%%%%%%%%%%%%%%%%%

The $\pi$-calculus processes are given by the following grammar:
$$
P \ ::= \ \zero \ \bnf \ 
P \bnf Q \ \bnf \ !P \ \bnf \ (\nu n) P \ \bnf \ 
\piiap a {\wt x}.P\ \bnf \ \pioap a {\wt b}.P\; .
$$
The null process, parallel composition, replication, and restriction are as usual.
The input $\piiap a {\wt x}.P$ has channel name $a$ and a sequence of binding names
$x_1,x_2,\ldots,x_i$ denoted by $\wt x$ and body $P$.
The output $\pioap a {\wt b} .P$ has channel name $a$ and sequence of output names
$b_1,b_2,\ldots,b_i$ denoted $\wt b$ and body $P$.
The length of a sequence $\wt x$ is denoted $|\wt x|$,
i.e.~$|x_1,x_2,\ldots,x_k|=k$.
$\alpha$-conversion and structural equivalence are as usual.

The only reduction axiom is
$$
\pioap m {\wt b}.P\bnf \piiap n {\wt x}.Q \quad\redar\quad P\bnf \{b_i/x_i\}Q
\qquad\qquad m=n \mbox{\ and\ } |\wt b|=|\wt x|\; .
$$
That is an output and input reduce if they have the same channel name and the
length of their output names and binding names are the same, reducing to
the body of the output, in parallel with
the substitution that binds each output name $b_i$ to the corresponding binding name $x_i$
applied to the body of the input.
The reduction relation is obtained by closing this reduction rule by parallel, restriction and the
same structural congruence relation defined for ACPC.

%%%%%%%%%%%%%%%%%%%%%%%%%%%%%%%%%%%

The first limitation of $\pi$-calculi is in the number of names that can be determined
equal in an interaction. In the Psi Calculi encoding 
the channel name is used to detect the state, this can also be used for $\pi$-calculi as
well.
However, the detection of the symbol at the current head position would require an additional
reduction.
There are two approaches that can resolve this first limitation while maintaining faithfulness.
The first solution is to account for both names by representing every possible
pair of state and symbol by a new name. That is, the encoding of a transition tuple
can be represented by
\begin{eqnarray*}
\encode{\tuple{q_1,s_2,s_3,d_4,q_5}} &\define& {q_1 s_2} (\ldots).P
\end{eqnarray*}
for some form of input $(\ldots)$ and process $P$. Here the state $q_1$ and current
head symbol $s_2$ are combined into a single name $q_1 s_2$ by the encoding.
The second approach is to use a structural equivalence rule such as
$\ifte m n P Q$ with the following rules
\begin{equation*}
\ifte m m P Q \equiv P \qquad\qquad \ifte m n P Q \equiv Q\quad m\neq n \; .
\end{equation*}
Now the encoding of all of the tuples of the form $\tuple{q_1,s_a,s_b,d,q_c}$
are of the form
\begin{eqnarray*}
\encode{\tuple{q_1,s_a,s_b,d,q_c}} &\define& q_1(x,\ldots).
\ifte x {s_1} {P1} { }
\bnf \hdots
\bnf \ifte x {s_i} {P_i} \zero
\end{eqnarray*}
for each possible symbol $s_1,s_2,\ldots,s_i\in{\mathcal S}$.
Here $P_j$ represents the process that does the transition for
$\tuple{q_1,s_j,s_x,d,q_y}$, that is the reductions that correspond to the
transition for the matching current head position symbol $s_j$, and are $\zero$
otherwise.

Note that there are other solutions to the problem of matching the state, such as
doing further reductions after binding the symbol at the current head position,
however these would immediately fail faithfulness.

The impossibility of encoding Turing Machines faithfully without intensionality
arises from the encoding of the tape into a single structure.
Since $\pi$-calculi cannot bind a structured term to a single name, it is impossible
to represent the infinite tape by a finite structure.
The closest is to take the traditional approach of using some name(s) to identify
where the rest of the structure (tape) can be obtained from.
For example, consider the followings of an encoding into $\pi$-calculus:
\begin{eqnarray*}
\ytrans {\ldots,b,s_i,\ldots,s_1} _L &\define&
  \res {x_1} \overline{l} \langle s_1,x_1\rangle \bnf
  \res {x_2} \overline{x_1} \langle s_2,x_2\rangle \bnf
  \ldots
  \res {x_i} \overline{x_{i-1}} \langle s_i,x_i\rangle \bnf
  \res {x_{i+1}} \overline{x_i} \langle b,x_{i+1}\rangle \bnf \ldots\\
\ytrans {s_1,\ldots,s_i,b,\ldots} _R &\define&
  \res {x_1} \overline{r} \langle s_1,x_1\rangle \bnf
  \res {x_2} \overline{x_1} \langle s_2,x_2\rangle \bnf
  \ldots
  \res {x_i} \overline{x_{i-1}} \langle s_i,x_i\rangle \bnf
  \res {x_{i+1}} \overline{x_i} \langle b,x_{i+1}\rangle \bnf \ldots\; .
\end{eqnarray*}
where $l$ and $r$ are reserved names for the left and right hand sides of the tape, respectively,
in the encoding.
Observe that in each case, the name can be used as a channel to input the symbol to
the left $l$ or right $r$ and the next name to use for the next symbol in that direction.
Note that parallel composition is used as this allows results to exploit structural
equivalence.

Using this approach the state $q_i$ and tape
$\tape{\ldots,b,s_a,\ldots,s_g,{\bf s_h},s_i\ldots,s_j,b,\ldots}$ can be encoded by
\begin{eqnarray*}
\xtrans{\ \tape{\ldots,b,s_a,\ldots,s_g,{\bf s_h},s_i\ldots,s_j,b,\ldots}\ }_{q_i}
&\define& q_i s_h\langle l,r\rangle.\zero \bnf
\ytrans{\ldots,b,s_a,\ldots,s_g}_L\bnf
\ytrans {s_i\ldots,s_j,b,\ldots}_R\; .
\end{eqnarray*}

Now a transition $\tuple{q_i, s_1, s_2, d, q_j}$ can be encoded as follows
(showing the $d=L$ case only):
\begin{eqnarray*}
\encode{\tuple{q_i, s_1, s_2, L, q_j}} &\define&
 q_i s_1 ( l_c, r_c ) .\\
& & l_c ( s_c, l_1 ) .\\
& &  (v r_1)\\
& & (\ifte {s_c} {s_{00}} {q_j s_{00}\langle l_1, r_1\rangle} {}\hdots
\ifte {s_c} {s_{kk}} {q_j s_{kk}\langle l_1, r_1\rangle} \zero\\
& & \bnf r_1 < s_2, r_c >)
\end{eqnarray*}
where each line after the encoding does as follows.
The $q_i s_1 ( l_c, r_c ) .$ matches the (encoded) state $q_i$ and current head position symbol $s_1$,
and binds the names
to access the left and right parts of the tape to $l_c$ and $r_c$, respectively.
Since the transition moves left, the $l_c ( s_c, l_1 ) .$ then reads the next symbol to the
left $s_c$ and the name to access the rest of the left hand side of the tape $l_1$.
A new name for the new right hand side of the tape is created with $(v r_1)$.
The next
line detects
the new symbol $s_c$ under the current head position by
comparing to each possible symbol $s_{00},s_{01},\ldots,s_{kk}\in{\mathcal S}$ and
outputting the new left and right hand tape access channel names $l_1$ and $r_1$ respectively
on the appropriately encoded channel name $q_j s_c$.
Finally, in parallel $r_1 \langle s_2, r_c \rangle$ provides the new right hand side of the tape.
The encoding of the $d=R$ transitions can be done similarly.

Putting all of these pieces together as in Section~\ref{sec:enc} allows similar results
to ACPC to be applied to $\pi$-calculi.

\begin{lemma}
\label{lem:tape-no-red-pi}
The representation
$\xtrans{\ \tape{\ldots,b,s_a,\ldots,s_g,{\bf s_h},s_i\ldots,s_j,b,\ldots}\ }_{q_i}
\define q_i s_h\langle l,r\rangle.\zero \bnf
\ytrans{\ldots,b,s_a,\ldots,s_g}_L\bnf
\ytrans {s_i\ldots,s_j,b,\ldots}_R$.
of the state $q_i$ and tape $\tape{\ldots,b,s_a,\ldots,s_g,{\bf s_h},s_i\ldots,s_j,b,\ldots}$
does not reduce.
\end{lemma}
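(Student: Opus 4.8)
The plan is to observe that the representation is, up to structural congruence, nothing more than a parallel composition of \emph{output}-prefixed processes underneath a block of restrictions; it therefore contains no unguarded input, and since the sole reduction axiom of $\pi$-calculus demands an unguarded output \emph{and} an unguarded input sharing a channel, no reduction can fire.

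In detail, I would first unfold the definitions. Each of $\ytrans{\ldots,b,s_a,\ldots,s_g}_L$ and $\ytrans{s_i\ldots,s_j,b,\ldots}_R$ is by construction a finite parallel composition of processes of the shape $\res{x_k}\,\overline{c_k}\langle s_k,x_k\rangle.\zero$, where $c_k$ is a reserved name ($l$ or $r$) or a previously generated $x_{k-1}$, and every output prefix has body $\zero$. After $\alpha$-converting the bound names $x_k$ to be pairwise distinct and fresh for everything else (allowed by the $=_\alpha$ clause of $\equiv$), repeated application of scope extrusion, $P\bnf\res a Q\equiv\res a(P\bnf Q)$ when $a\notin{\sf fn}(P)$, rewrites the whole representation into the form $\res{x_1}\cdots\res{x_m}\bigl(q_i s_h\langle l,r\rangle.\zero \bnf \overline{c_1}\langle s_1,x_1\rangle.\zero \bnf \cdots\bigr)$: one block of restrictions over a parallel product of output-guarded processes with trivial continuations, and no input anywhere.

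Next I would establish the invariant that no process structurally congruent to the representation possesses an unguarded input occurrence. This is a routine induction over the rules generating $\equiv$: associativity, commutativity and the unit law for $\bnf$, $\alpha$-conversion, the restriction-permutation and garbage-collection laws, and scope extrusion each preserve the finite multiset of unguarded prefixes (up to channel names), while $!P\equiv P\bnf !P$ never applies since no replication occurs in the representation. As the representation visibly has no unguarded input, neither does any $\equiv$-variant of it. Since the $\pi$-calculus reduction relation is the closure of its reduction axiom under parallel composition, restriction and $\equiv$, any reduction out of the representation would force it to be structurally congruent to a process of the form $\res{\wt n}\bigl(\overline{m}\langle\wt b\rangle.P_0 \bnf n(\wt x).Q_0 \bnf S\bigr)$, which exhibits an unguarded input --- a contradiction. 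Hence the representation does not reduce.

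The only delicate part is the bookkeeping for restriction and $\alpha$-conversion: one must check that the per-cell bound names can indeed be chosen distinct and extruded without capture, and that the ``no unguarded input'' predicate is genuinely stable under \emph{every} clause of $\equiv$, scope extrusion in particular. This mirrors the proof of Lemma~\ref{lem:tape-no-red} for ACPC --- where the representation is a single output that the reduction axiom simply cannot instantiate --- and the proof of Lemma~\ref{lem:trans-no-red}, where the symmetric observation (only inputs, no outputs) is made.
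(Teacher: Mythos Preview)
The paper states this lemma without proof, treating it as immediate (as it does for the analogous Lemma~\ref{lem:tape-no-red} in ACPC). Your argument is the natural one and is correct: the representation consists solely of output-prefixed processes, no clause of $\equiv$ can manufacture an unguarded input, and the $\pi$-calculus reduction axiom requires one. This is precisely the dual of the paper's explicit proof of Lemma~\ref{lem:trans-no-red}, as you yourself observe at the end.

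One small slip worth fixing: you describe each of $\ytrans{\cdot}_L$ and $\ytrans{\cdot}_R$ as a \emph{finite} parallel composition and collect the restrictions as $\res{x_1}\cdots\res{x_m}$, but in the $\pi$-calculus encoding of Section~\ref{sec:fail} the tape is still infinite, and so are these parallel compositions and the nesting of restrictions (the paper gives no finite-tape variant for $\pi$-calculus). This does not damage the argument --- the ``no unguarded input'' invariant is a local syntactic property and does not depend on finiteness --- but the wording should be adjusted, and the scope-extrusion step into a single finite restriction block should be dropped or rephrased accordingly.
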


\begin{lemma}
\label{lem:trans-no-red-pi}
The encoding $\encode{\mathcal F} \define \prod _{u\in{\mathcal F}} \ !\encode u$
where $u$ is each tuple of the form $\tuple {q_i,s_1,s_2,d,q_j}$
of the transition function ${\mathcal F}$
does not reduce.
\end{lemma}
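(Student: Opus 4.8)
The plan is to follow the proof of Lemma~\ref{lem:trans-no-red}, adapted to the $\pi$-calculus encoding. First I would note that for each tuple $u = \tuple{q_i,s_1,s_2,d,q_j}$ of ${\mathcal F}$ the process $\encode u$ is input-prefixed: its outermost prefix is the input $q_i s_1(l_c,r_c).(\ldots)$ when $d=L$, and symmetrically for $d=R$, and \emph{every} output prefix and \emph{every} conditional $\ifte{\cdot}{\cdot}{\cdot}{\cdot}$ occurring in $\encode u$ lies underneath at least one input prefix (the leading $q_i s_1(\ldots)$, or further in the $l_c(s_c,l_1).(\ldots)$, or the analogous right-hand prefix). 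Hence $\encode u$ has no unguarded output prefix and no unguarded conditional, and therefore neither does $\encode{\mathcal F} = \prod_{u\in{\mathcal F}}!\encode u$, being a parallel composition of replications of such processes.

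The one step worth a moment's attention is that the $\pi$-calculus reduction relation is closed under structural congruence, and $\equiv$ includes the replication law $!P\equiv P\bnf !P$, which may be applied arbitrarily often; so rather than inducting on reductions I would isolate the property ``has no unguarded output prefix and no unguarded conditional'', defined by structural recursion over $\zero$, $\bnf$, $!$, $(\nu n)\cdot$, the two prefixes, and $\ifte{\cdot}{\cdot}{\cdot}{\cdot}$ (both prefixes acting as guards for whatever sits below them), and check that it is preserved by every $\equiv$-step applicable starting from $\encode{\mathcal F}$. The commutativity, associativity and unit laws for $\bnf$, $\alpha$-conversion, the restriction laws, scope extrusion, and $!P\equiv P\bnf !P$ plainly preserve it; and the conditional-rewriting laws $\ifte mm PQ\equiv P$ and $\ifte mn PQ\equiv Q$ (for $m\neq n$) are never triggered at an unguarded position precisely because, by the property just stated, no conditional is ever unguarded in a process reachable from $\encode{\mathcal F}$ via the preceding laws. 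This is a routine case analysis; the replication law is the only one worth dwelling on, since it is what forces the argument to be phrased as an $\equiv$-closed invariant rather than a naive induction.

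With the invariant in hand the conclusion is immediate: a reduction $\encode{\mathcal F}\redar Q$ would require, by the definition of the reduction relation (the single axiom closed under parallel composition, restriction and $\equiv$), that some process $\equiv$ to $\encode{\mathcal F}$ exhibit an output prefix and an input prefix on the same channel with matching arities at unguarded positions, hence in particular an unguarded output prefix, contradicting the invariant. Hence $\encode{\mathcal F}$ does not reduce. I expect the only mildly fiddly point to be the bookkeeping in the $\equiv$-closure step, and within it the sole subtlety is ensuring the conditional rules cannot fire at top level, which is handled by carrying ``no unguarded conditional'' along in the invariant.
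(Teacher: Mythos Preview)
The paper states this lemma without proof, implicitly relying on the argument given for the analogous ACPC result (Lemma~\ref{lem:trans-no-red}): each $\encode u$ is an input, so no unguarded outputs are present and the reduction axiom cannot fire. Your proposal takes the same line and is correct; it is simply more explicit than the paper about two points the $\pi$-calculus setting introduces, namely the $\ifte{\cdot}{\cdot}{\cdot}{\cdot}$ construct and the need to argue closure of the ``no unguarded output / no unguarded conditional'' property under $\equiv$ (especially replication unfolding). That extra care is appropriate here but does not constitute a different approach.
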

%\begin{proof}
%For every tuple $u$ in ${\mathcal F}$ the encoding $\encode u$ is an input.
%It is then straightforward to consider all the structural congruence rules
%and show that there are no outputs and thus the reduction axiom cannot be satisfied.
%\end{proof}

Finally, the encoding $\encode\cdot$ of a Turing Machine into 
$\pi$-calculus is given by:
\begin{eqnarray*}
\encode{\tmach{{\mathcal S}, {\mathcal Q}, {\mathcal F}, {\mathcal T}, q_i}}
&\quad\define\quad&
\res l\res r (\ \encode {\mathcal T}_{q_i}
\BNF
\encode{\mathcal F}\ )\; .
\end{eqnarray*}

The limitations of $\pi$-calculi appear in the following lemma where the correspondence
of a single reduction in the original to a single reduction in the translation is lost,
instead a single reduction becomes two reductions (or more for some other $\pi$-calculi).
Further, the proof is complicated by now having to consider $\alpha$-equivalence of all
the restricted names.
The impact is also in the final theorem in this section where faithfulness is lost.

\begin{lemma}
\label{lem:reduction-pi}
Given a Turing Machine $\tmach{{\mathcal S}, {\mathcal Q}, {\mathcal F}, {\mathcal T}, q_i}$
then
\begin{enumerate}
\item If there is a transition
$\tmach{{\mathcal S}, {\mathcal Q}, {\mathcal F}, {\mathcal T}, q_i}\redar
\tmach{{\mathcal S}, {\mathcal Q}, {\mathcal F}, {\mathcal T}', q_j}$
then there are reductions
$\encode{\tmach{{\mathcal S}, {\mathcal Q}, {\mathcal F}, {\mathcal T}, q_i}}\redar\redar Q$
where $Q\equiv\encode{\tmach{{\mathcal S}, {\mathcal Q}, {\mathcal F}, {\mathcal T}', q_j}}$, and
\item if there is a reduction
$\encode{\tmach{{\mathcal S}, {\mathcal Q}, {\mathcal F}, {\mathcal T}, q_i}}\redar Q$
then there exists $Q'$ such that $Q\redar Q'$ and
$Q'\equiv\encode{\tmach{{\mathcal S}, {\mathcal Q}, {\mathcal F}, {\mathcal T}', q_j}}$ and
there is a transition
$\tmach{{\mathcal S}, {\mathcal Q}, {\mathcal F}, {\mathcal T}, q_i}\redar
\tmach{{\mathcal S}, {\mathcal Q}, {\mathcal F}, {\mathcal T}', q_j}$.
\end{enumerate}
\end{lemma}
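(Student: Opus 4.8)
The plan is to follow the proof of Lemma~\ref{lem:faithful}, adjusting for the two structural features that distinguish the $\pi$-calculus encoding: a single Turing Machine transition is now realised by \emph{two} communications --- a state/symbol handshake on the channel $q_i s_1$ followed by one tape-link communication on $l$ or $r$ --- and the closing congruence must absorb an $\alpha$-renaming of the restricted ``link'' names that thread the encoded tape. As before, Lemmas~\ref{lem:tape-no-red-pi} and~\ref{lem:trans-no-red-pi} ensure that neither $\encode{\mathcal T}_{q_i}$ nor $\encode{\mathcal F}$ reduces on its own, so every reduction of $\encode{\tmach{{\mathcal S},{\mathcal Q},{\mathcal F},{\mathcal T},q_i}}=\res l\res r(\encode{\mathcal T}_{q_i}\bnf\encode{\mathcal F})$ is a communication across the outermost parallel bar.

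For part~1, given a transition I would first isolate the tuple $u=\tuple{q_i,s_1,s_2,d,q_j}\in{\mathcal F}$ responsible for it, note that ${\mathcal T}=\tape{\ldots,s_m,{\bf s_1},s_n,\ldots}$, and hence that ${\mathcal T}'=\tape{\ldots,{\bf s_m},s_2,s_n,\ldots}$ for $d=L$ (resp.\ $\tape{\ldots,s_m,s_2,{\bf s_n},\ldots}$ for $d=R$). Using $!\encode u\equiv\encode u\bnf{!\encode u}$ one copy of $\encode u$ is exposed. Taking $d=L$: the first reduction is the handshake between the head output on channel $q_i s_1$ (legitimate, since $s_1$ is the current symbol) and the input $q_i s_1(l_c,r_c)$ of $\encode u$ --- equal channels, both of arity $2$ --- substituting the top-level $l,r$ for $l_c,r_c$; the residual continuation is then an input on $l$, and the second reduction fires it against the leftmost left-tape output $\overline{l}\langle s_m,x_1\rangle$ (after extruding its restriction), substituting $s_m$ for $s_c$ and $x_1$ for $l_1$, so $\encode{\tmach{\ldots}}\redar\redar Q$. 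The parallel chain of conditionals in $Q$ then collapses by the structural rules for $\ifte{m}{n}{P}{Q}$ --- only the branch guarded by $s_m$ survives, the rest become $\zero$ --- leaving the output $\overline{q_j s_m}\langle x_1,r_1\rangle$ beside $\overline{r_1}\langle s_2,r\rangle$ together with the untouched remainders of the two half-tapes and $!\encode u$. It then remains to check that, after $\alpha$-converting $x_1$ and $r_1$ to the canonical reserved names of the target encoding, re-reading the old top-level $l,r$ as interior link names of the new half-tapes, hoisting all restrictions to the top, discarding the trailing $\bnf\,\zero$ and folding $!\encode u$ back into $\encode{\mathcal F}$, the result is precisely $\encode{\tmach{{\mathcal S},{\mathcal Q},{\mathcal F},{\mathcal T}',q_j}}$. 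The $d=R$ case is symmetric, exchanging the roles of $l$ and $r$ and of the two half-tapes.

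For part~2, given $\encode{\tmach{\ldots}}\redar Q$: by the two no-reduction lemmas the step is a communication between tape and transitions, and the only output of $\encode{\mathcal T}_{q_i}$ sharing a channel with an input of $\encode{\mathcal F}$ is the head output on $q_i s_h$ (every tape-link output is guarded by a restriction local to $\encode{\mathcal T}_{q_i}$), so the redex is the handshake with the unique tuple $u=\tuple{q_i,s_h,s_2,d,q_j}\in{\mathcal F}$ whose encoded input sits on $q_i s_h$; in particular the Turing Machine transition determined by $u$ (applicable because the current state is $q_i$ and current symbol $s_h$) exists. In $Q$ the only remaining redex is the link input on $l$ (resp.\ $r$) against the appropriate extreme tape output --- the replicated $\encode u$ cannot fire again, since the head output is gone and no other channel carries an output --- so performing that step produces $Q'$ with $Q\redar Q'$, and by the computation of part~1, $Q'\equiv\encode{\tmach{{\mathcal S},{\mathcal Q},{\mathcal F},{\mathcal T}',q_j}}$.

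I expect the two reductions themselves to be routine, as they are forced at every stage; the delicate point, as the paper already flags, is the closing structural-congruence step of part~1 --- verifying that the freshly generated name $r_1$, the recycled link name $x_1$ and the now-interior names $l,r$ can be re-$\alpha$-named and the restriction scopes re-nested so that the residual is \emph{literally} the canonical encoding of the successor configuration, including the blank-extending behaviour at the two infinite ends of the tape. That bookkeeping, rather than any conceptual difficulty, is where the care goes.
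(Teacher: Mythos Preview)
Your proposal is correct and follows essentially the same approach as the paper's proof: isolate the responsible tuple via the no-reduction lemmas, perform the two forced communications (state/symbol handshake then tape-link read), and then discharge the structural-congruence step by $\alpha$-renaming the link names; for part~2 you likewise argue that the first reduction is forced onto a unique tuple and that the second is then determined. The paper phrases the final bookkeeping as ``two applications of induction on the indices of the restrictions $\res{x_i}$'' to realign each half-tape, which is exactly the renaming/re-scoping you describe informally; your account is slightly more explicit about the collapse of the $\ifte{\cdot}{\cdot}{\cdot}{\cdot}$ chain, which the paper leaves implicit.
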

\begin{proof}
The first part is proven by examining the tuple $u$ that corresponds to the transition 
$\tmach{{\mathcal S}, {\mathcal Q}, {\mathcal F}, {\mathcal T}, q_i}\redar
\tmach{{\mathcal S}, {\mathcal Q}, {\mathcal F}, {\mathcal T}', q_j}$
by the Turing Machine. This tuple must be of the form
$\tuple{q_i,s_1,s_2,d,q_j}$ and it must also be that ${\mathcal T}$ is of the form
$\tape{\ldots,s_k,s_m,{\bf s_1},s_n,s_o,\ldots}$.
Further, it must be that ${\mathcal T}'$ is either:
$\tape{\ldots,s_k,{\bf s_m},s_2,s_n,s_o,\ldots}$ when $d$ is $L$, or
$\tape{\ldots,s_k,s_m,s_2,{\bf s_n},s_o,\ldots}$ when $d$ is $R$.
Now $\encode{\tmach{{\mathcal S}, {\mathcal Q}, {\mathcal F}, {\mathcal T}, q_i}}$
is of the form
$\res l\res r(\ \encode {\mathcal T}_{q_i}\BNF \encode{\mathcal F}\ )$ and
by Lemmas~\ref{lem:tape-no-red-pi} and \ref{lem:trans-no-red-pi} neither $\encode {\mathcal T}_{q_i}$
nor $\encode{\mathcal F}$ can reduce, respectively.
Now by exploiting structural congruence gain that
$\res l\res r(\ \encode {\mathcal T}_{q_i} \BNF \encode{\mathcal F}\ )
\equiv
\res l\res r(\ \encode {\mathcal T}_{q_i} \BNF \encode {\tuple{q_i,s_1,s_2,d,q_j}}
\BNF \encode{\mathcal F}\ )$.
Then by two applications of the definition of the $\pi$-calculus reduction axiom 
it is straightforward to show that
$\encode {\mathcal T}_{q_i} \BNF \encode {\tuple{q_i,s_1,s_2,d,q_j}}\redar\redar
P$ for some $P$.
Now by two applications of $\alpha$-conversion it can be shown that
$P\equiv \res l \res r (\ q_j s_x\langle l, r\rangle.\zero \bnf P_L\bnf P_R\bnf \encode{\mathcal F} \ )$
for some $s_x$ and $P_L$ and $P_R$.
Now consider $d$.
\begin{itemize}
\item When $d=L$ it can be shown that $s_x=s_m$ and by two applications of induction on the
  indices of the restrictions $\res {x_i}$ of the left and right hand sides of the tape it can be shown
  that $P_L\equiv \ytrans {\ldots,s_k} _L$ and that $P_R\equiv \ytrans {s_2,s_n,s_o,\ldots} _R$.
\item When $d=R$ is can be shown that $s_x=s_n$ and by two applications of induction on the
  indices of the restrictions $\res {x_i}$ of the left and right hand sides of the tape it can be shown
  that $P_L\equiv \ytrans {\ldots,s_k,s_m,s_2} _L$ and that $P_R\equiv \ytrans {s_o,\ldots} _R$.
\end{itemize}

The reverse direction is proved similarly by observing that the only possible reduction
$\encode{\tmach{{\mathcal S}, {\mathcal Q}, {\mathcal F}, {\mathcal T}, q_i}}\redar Q$
must be due to a tuple $\tuple{q_i,s_1,s_2,d,q_j}$ that is in the transition function ${\mathcal F}$.
Then it follows that there exists a reduction $Q\redar Q'$ by definition of
$\encode {\tuple{q_i,s_1,s_2,d,q_j}}$.
Finally showing that $Q'\equiv\encode{\tmach{{\mathcal S}, {\mathcal Q}, {\mathcal F}, {\mathcal T}', q_j}}$
again requires tedious renaming of both sides of the encoded tape.
\end{proof}

\begin{theorem}
\label{thm:done-pi}
The encoding $\encode\cdot$ of a Turing Machine into (synchronous polyadic) $\pi$-calculus;
preserves reduction, and divergence.
That is, given a Turing Machine $\tmach{{\mathcal S}, {\mathcal Q}, {\mathcal F}, {\mathcal T}, q_i}$
then it holds that:
\begin{enumerate}
\item there is a transition
$\tmach{{\mathcal S}, {\mathcal Q}, {\mathcal F}, {\mathcal T}, q_i}\redar
\tmach{{\mathcal S}, {\mathcal Q}, {\mathcal F}, {\mathcal T}', q_j}$
if and only if there are reductions
$\encode{\tmach{{\mathcal S}, {\mathcal Q}, {\mathcal F}, {\mathcal T}, q_i}}\redar\redar Q$
where $Q\equiv\encode{\tmach{{\mathcal S}, {\mathcal Q}, {\mathcal F}, {\mathcal T}', q_j}}$, and
\item there is an infinite sequence of transitions
$\tmach{{\mathcal S}, {\mathcal Q}, {\mathcal F}, {\mathcal T}, q_i}\redar^\omega$
if and only if there is an infinite sequence of reductions
$\encode{\tmach{{\mathcal S}, {\mathcal Q}, {\mathcal F}, {\mathcal T}, q_i}}\redar^\omega$.
\end{enumerate}
\end{theorem}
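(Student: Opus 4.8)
The plan is to obtain Theorem~\ref{thm:done-pi} from Lemma~\ref{lem:reduction-pi} in exactly the way Theorem~\ref{thm:done} is obtained from Lemma~\ref{lem:faithful}, the sole difference being that a single Turing Machine transition is now matched by a \emph{pair} of $\pi$-calculus reductions rather than by one. Throughout I write $M$, $M'$, $\ldots$ for configurations of the form $\tmach{{\mathcal S},{\mathcal Q},{\mathcal F},{\mathcal T},q}$, with $M_0=\tmach{{\mathcal S},{\mathcal Q},{\mathcal F},{\mathcal T},q_i}$ the machine fixed in the statement.

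For Part~1, the left-to-right implication is immediate from Lemma~\ref{lem:reduction-pi}.1: a transition $M\redar M'$ yields $\encode{M}\redar\redar Q$ with $Q\equiv\encode{M'}$. For the converse, suppose $\encode{M}\redar R\redar Q$ with $Q\equiv\encode{M'}$. Applying Lemma~\ref{lem:reduction-pi}.2 to the first step gives a transition $M\redar M''$ and a process $R'$ with $R\redar R'\equiv\encode{M''}$. It then suffices to observe that $R$ has, up to $\equiv$, a unique reduction: once the leading input $q_i s_h(\ldots)$ of the fired tuple has been consumed there is no $q$-labelled output available anywhere (the encoded-state output has been used and $\encode{\mathcal F}$ offers only inputs, while the two encoded tape sides offer only outputs on $l$ and $r$), so the single enabled redex is the guarded input that reads the adjacent tape cell; after it fires, the cascade of ${\sf if}$-then-else guards collapses structurally, exposing the new encoded-state output. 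Hence $Q\equiv R'\equiv\encode{M''}$, so $M''=M'$ and the required transition exists.

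For Part~2, left to right I would concatenate the length-two reduction segments supplied by Lemma~\ref{lem:reduction-pi}.1 along the infinite transition sequence. Right to left I would show, by induction along the reduction sequence, that every infinite reduction from $\encode{M_0}$ factors as $\encode{M_0}\redar\redar\encode{M_1}\redar\redar\cdots$: at an encoding state $\encode{M_k}$ the tape and transition-function components are inert (Lemmas~\ref{lem:tape-no-red-pi} and~\ref{lem:trans-no-red-pi}), so the only enabled step fires the leading input of an encoded tuple, whose channel $q_k s$ is fixed by the current state and head symbol, and the uniqueness argument above forces the very next step to reach $\encode{M_{k+1}}$ with $M_k\redar M_{k+1}$. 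The chain $M_0\redar M_1\redar\cdots$ is then the sought infinite transition sequence; together with the forward direction this settles the divergence equivalence.

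The main obstacle is the determinacy bookkeeping underpinning the converse directions: showing that once the first input of an encoded tuple has fired, the half-completed configuration can neither branch nor diverge before the matching second reduction restores an encoding state. This needs a case analysis on $d\in\{L,R\}$ of the body of $\encode{\tuple{q_i,s_1,s_2,d,q_j}}$, verifying that the freshly restricted tape channel and the ${\sf if}$-then-else guards admit no spurious interactions, together with the tedious $\alpha$-renaming of the chain of restricted names $x_1,x_2,\ldots$ running along each side of the encoded tape so that the two-step residual is literally structurally congruent to $\encode{M'}$ --- precisely the renaming already flagged in the proof of Lemma~\ref{lem:reduction-pi}.
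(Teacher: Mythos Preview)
Your proposal is correct and matches the paper's approach: the paper gives no explicit proof for Theorem~\ref{thm:done-pi}, leaving it implicitly to follow from Lemma~\ref{lem:reduction-pi} just as Theorem~\ref{thm:done} follows from Lemma~\ref{lem:faithful}. Your determinacy argument for the intermediate configuration $R$ (unique second reduction once the state output has been consumed, structural collapse of the ${\sf if}$-then-else cascade) and the injectivity of the encoding needed to conclude $M'=M''$ supply detail the paper leaves tacit, and are sound.
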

%\begin{proof}
%Both parts can be proved by exploiting Lemma~\ref{lem:reduction-pi}.
%\end{proof}

\section{Conclusions}
\label{sec:conc}

The traditional approaches to encoding Turing Machines into process calculi
tend to be indirect and lead to complex and unclear results.
This is particularly true when the traditional path for process calculi is
taken by encoding a Turing Machine into $\l$-calculus and then into a 
process calculus.

Recent calculi with intensional communication allow the representation of
the current state and tape of a Turing Machine to be made clear and simple.
Similarly, the capture of each transition of the Turing Machine by an
input that transforms the state into a new output is a straightforward and
elegant solution.
The result is an encoding that is not only clearer and more direct, but also
faithful and holds up to structural equivalence.

The encoding can also be adapted in various ways.
The infinite terms of an infinite tape can be made finite if the tape of
the Turing Machine has some finite sequence of symbols with infinite blanks
on either side.
The choice of asymmetric concurrent pattern calculus here is for clarity alone,
the results also hold with only minor adaptations for both Concurrent Pattern
Calculus and Psi Calculi.

The approach used here to encode Turing Machines into intensional process calculi
can also be used to inform on similar approaches into non-intensional process
calculi such as $\pi$-calculus. Although faithfulness is lost, the
simplicity of intensional calculi in both: matching many names in a single interaction,
and of binding complex structures to a single name, becomes clearer when observing
the complexity required to use this approach in $\pi$-calculus.
Thus despite $\pi$-calculi only losing faithfulness directly, the complexity of
the encodings into $\pi$-calculi and having to work with many restrictions and
renamings highlights the elegance of the
encodings into intensional calculi.

\subsection*{Future Work}

The r\^ole of intensionality in process calculi has not been explored in depth
outside of particular calculi. A more general exploration of the expressiveness
of intensionality remains to be published.

The approach of encoding Turing Machines by directly capturing the state and
transitions has some similarities to the encoding of $SF$-logic \cite{jay2011}
into CPC \cite{GivenWilsonPhD}.
Adapting these approaches to other types of Turing Machines or rewriting
systems is also of interest.

\bibliographystyle{eptcs}
\bibliography{auto}

\end{document}